\numberwithin{equation}{section}
\newtheorem{Theorem}{Theorem}[section]
\newtheorem{Corollary}[Theorem]{Corollary}
\newtheorem{Conjecture}[Theorem]{Conjecture}
\newtheorem{Proposition}[Theorem]{Proposition}
 { \theoremstyle{definition}
\newtheorem{Remark}[Theorem]{Remark} }
\newcommand{\mbP}{\mathbb P}
\newcommand{\mbZ}{\mathbb Z}
\newcommand{\mbC}{\mathbb C}
\newcommand{\oM}{\overline{\mathcal M}}
\newcommand{\tu}{{\widetilde u}}
\newcommand{\og}{\overline g}
\newcommand{\oh}{\overline h}
\newcommand{\hLambda}{\widehat\Lambda}
\def\CP{{{\mathbb C}{\mathbb P}}}
\renewcommand{\Im}{\operatorname{Im}}
\def\mbQ{{\mathbb Q}}
\def\d{{\partial}}
\newcommand{\<}{\left<}
\renewcommand{\>}{\right>}
\newcommand{\eps}{\varepsilon}
\newcommand{\hcA}{\widehat{\mathcal A}}
\newcommand{\DR}{\mathrm{DR}}
\newcommand{\DZ}{\mathrm{DZ}}
\newcommand{\cF}{\mathcal F}
\newcommand{\grj}{\mathfrak{j}}
\newcommand{\ci}{\mathrm{i}}
\newcommand{\cQ}{\mathcal{Q}}
\newcommand{\Ch}{\mathrm{Ch}}
\newcommand{\oG}{{\overline G}}
\newcommand{\oH}{{\overline H}}
\newcommand{\tC}{\widetilde C}
\newcommand{\Li}{\mathrm{Li}}
\renewcommand{\t}{\mathrm{t}}
\newcommand{\res}{\mathop{\mathrm{res}}\nolimits}
\newcommand{\GD}{\mathrm{GD}}
\newcommand{\Res}{\operatorname{Res}}
\begin{document}

\allowdisplaybreaks

\newcommand{\arXivNumber}{1703.00232}

\renewcommand{\thefootnote}{}

\renewcommand{\PaperNumber}{060}

\FirstPageHeading

\ShortArticleName{Integrability, Quantization and Moduli Spaces of Curves}

\ArticleName{Integrability, Quantization\\ and Moduli Spaces of Curves\footnote{This paper is a~contribution to the Special Issue on Recent Advances in Quantum Integrable Systems. The full collection is available at \href{http://www.emis.de/journals/SIGMA/RAQIS2016.html}{http://www.emis.de/journals/SIGMA/RAQIS2016.html}}}

\Author{Paolo ROSSI}

\AuthorNameForHeading{P.~Rossi}

\Address{IMB, UMR5584 CNRS, Universit\'e de Bourgogne Franche-Comt\'e, F-21000 Dijon, France}
\Email{\href{paolo.rossi@u-bourgogne.fr}{paolo.rossi@u-bourgogne.fr}}
\URLaddress{\url{http://rossi.perso.math.cnrs.fr/}}

\ArticleDates{Received February 28, 2017, in f\/inal form July 25, 2017; Published online July 29, 2017}

\Abstract{This paper has the purpose of presenting in an organic way a new approach to integrable $(1+1)$-dimensional f\/ield systems and their systematic quantization emerging from intersection theory of the moduli space of stable algebraic curves and, in particular, cohomological f\/ield theories, Hodge classes and double ramif\/ication cycles. This methods are alternative to the traditional Witten--Kontsevich framework and its generalizations by Dubrovin and Zhang and, among other advantages, have the merit of encompassing quantum integrable systems. Most of this material originates from an ongoing collaboration with A.~Buryak, B.~Dubrovin and J.~Gu\'er\'e.}

\Keywords{moduli space of stable curves; integrable systems; cohomological f\/ield theories; double ramif\/ication cycle; double ramif\/ication hierarchy}

\Classification{14H10; 14H70; 37K10}

\renewcommand{\thefootnote}{\arabic{footnote}}
\setcounter{footnote}{0}

\section{Introduction}
This paper deals with a novel construction that associates an integrable, tau-symmetric hierarchy and its quantization to a cohomological f\/ield theory on the moduli space of stable curves $\oM_{g,n}$, without the semisemplicity assumption which is needed for the Dubrovin--Zhang hierarchy. It is inspired by Eliashberg, Givental and Hofer's symplectic f\/ield theory~\cite{EGH00} and is the fruit of a~joint project of the author with A.~Buryak and, more recently, with J.~Gu\'er\'e and B.~Dubrovin.

Since the construction makes explicit use of the intersection theory of the double ramif\/ication cycle, we call this hierarchy the double ramif\/ication (DR) hierarchy. It was in fact A.~Buryak who introduced its classical version in~\cite{Bur15}, where he also explicitly computed the f\/irst two examples (the classical DR hierarchies of the trivial and Hodge CohFTs, corresponding to the KdV and intermediate long wave hierarchies), thereby showing the interest and power of this technique.

Its properties, quantization and relation with the DZ hierarchy were studied and clarif\/ied in the series of joint papers \cite{BDGR16a,BDGR16b,BDGR17,BR15, BR14}, partly guided by our previous investigations of the classical and quantum integrable systems arising in SFT \cite{FR10,Ros08b,Ros08a,Ros10a,Ros10b,Ros12}.

The DR hierarchy has many interesting properties and even advantages over the more classical Dubrovin--Zhang hierarchy, including a much more direct access to the explicit form of the Hamiltonians and Poisson structure, a natural and completely general technique to quantize the integrable systems thus produced, recursion relations for the Hamiltonians that are reminiscent of genus $0$ TRRs in Gromov--Witten theory but work at all genera. When Dubrovin proposed to me to work on a thesis on integrable systems arising in SFT, back in 2004, he said he believed that was the actual correct approach to integrable hierarchies from moduli spaces of curves. I believe that prediction has found complete conf\/irmation in the power of the DR hierarchy project.

Finally, one of the main parts of this project is the proof of the conjecture (originally proposed in a weaker form by A.~Buryak) that the DZ and DR hierarchies for a semisimple CohFT are in fact equivalent under a change of coordinates that preserves their tau-symmetry property (a~normal Miura transformation), and which we completely identif\/ied in~\cite{BDGR16a}. While the general proof of such conjecture is the object of an ongoing work, we managed to show its validity in a~number of examples and classes of interesting special cases. Our present approach to the general statement reduces it to proving a f\/inite number of relations in the tautological ring of each $\oM_{g,n}$ with $n\leq 2g$ \cite{BDGR17}.

After a self contained introduction to the language of integrable systems in the formal loop space and the needed notions from the geometry of the moduli space of stable curves we will explain the double ramif\/ication hierarchy construction and present its main features, with an accent on the quantization procedure, concluding with a~list of examples worked out in detail. This paper does not contain new results with respect to the series of papers \cite{BDGR16a,BDGR16b,BDGR17,BR15, BR14}. It is however a~complete reorganization and, in part, a rephrasing of those results with the aim of showcasing the power of our methods and making them more accessible to the mathematical physics community.

\section{Integrable systems}

In this section I will try to give, in a few pages, a precise idea of what an integrable system is, in the context of evolutionary Hamiltonian PDEs. We will introduce the minimal notions that will be used in what follows and assume a certain familiarity with the f\/inite-dimensional theory of Poisson manifolds, to guide the reader in extending such notions to an inf\/inite-dimensional context.

\subsection{Formal loop space} An evolutionary PDE is a system of dif\/ferential equations of the form
\begin{gather*}\d_t u^\alpha = F^\alpha\big(u^*,u^*_1,u^*_2,\dots\big), \qquad \alpha=1,\dots,N,\end{gather*}
where $u^\alpha_k = \d_x^k u^\alpha$ and, here and in what follows, we use the symbol $*$ to indicate any value for the corresponding sub or superscripts.

Such a system can be heuristically interpreted as a vector f\/ield on the inf\/inite-dimensional space of all loops $u\colon S^1\to V$, where $V$ is a $N$-dimensional vector space with a basis $e_1,\dots,e_N$ and $x$ is the coordinate on $S^1$, so that $u^\alpha=u^\alpha(x)$ is the component along $e_\alpha$ of such loop. This is just a heuristic interpretation as we choose to work in a more formal algebraic setting by describing an appropriate ring of functions for the loop space of $V$ as follows.

Consider the ring of dif\/ferential polynomials $\hcA = \mbC[[u^*]][u^*_{>0}][[\eps]]$ and endow it with the grading $\deg(u^\alpha_k) = k$, $\deg(\eps)=-1$. We denote by $\hcA^{[d]}$ the degree $d$ part of $\hcA$. The role of the parameter $\eps$ and grading will become clear shortly. The operator $\d_x$ acts on $\hcA$ in the obvious way, i.e., $\d_x=\sum\limits_{k \geq 0} u^\alpha_{k+1} \frac{\d}{\d u^\alpha_k}$ (we use the convention of sum over repeated Greek indices, but not roman indices).

We def\/ine the space of local functionals as the quotient $\hLambda = \hcA /( \Im \d_x \oplus \mbC[[\eps]])$ and denote by $\hLambda^{[d]}$ its degree $d$ part. The equivalence class of $f(u^*_*;\eps)\in \hcA$ in this quotient will be denoted suggestively as $\overline{f} = \int f(u^*_*;\eps) dx$ (hinting at the quotient with respect to $\Im \d_x$ as the possibility of integrating by parts on the circle~$S^1$).

Local functionals in $\hLambda$ can hence be interpreted as functions on our formal loop space of $V$ whose value on a given loop $u\colon S^1\to V$ is the integral over $S^1$ of some dif\/ferential polynomial in its components~$u^\alpha(x)$.

Changes of coordinates on the formal loop space will be described accordingly as
\begin{gather*}\widetilde{u}^\alpha = \widetilde{u}^\alpha(u^*_*,\eps) \in \hcA^{[0]}, \qquad \det \left(\frac{\d \widetilde{u}^*|_{\eps=0}}{\d u^*}\right) \neq 0.\end{gather*}
Notice here the importance of the parameter $\eps$, whose exponent counts the number of $x$-de\-ri\-vatives appearing in $\widetilde{u}^\alpha$. Its importance lies in the fact that we can use the parameter $\eps$ to invert such change of coordinates: for f\/ixed $\widetilde{u}^\alpha(x)$, we just need to solve the ODE $\widetilde{u}^\alpha = \widetilde{u}^\alpha(u^*_*,\eps) $ for the functions $u^\alpha(x)$ order by order in $\eps$ and we will obtain a dif\/ferential polynomial $u^\alpha=u^\alpha(\widetilde{u}^*_*;\eps)$. The resulting group is called the Miura group.

Dif\/ferential polynomials and local functionals can also be described using another set of formal variables, corresponding heuristically to the Fourier components $p^\alpha_k$, $k\in\mbZ$, of the functions $u^\alpha=u^\alpha(x)$. Let us, hence, def\/ine a change of variables
\begin{gather*}
u^\alpha_j = \sum_{k\in\mbZ} (i k)^j p^\alpha_k e^{i k x},
\end{gather*}
which is nothing but the $j$-th derivative of $u^\alpha = \sum\limits_{k\in\mbZ} p^\alpha_k e^{i k x}$.

This allows us to express a dif\/ferential polynomial $f(u;u_x,u_{xx},\dots;\eps) \in \hcA^{[d]}$ as a formal Fourier series $f=\sum f_{\alpha_1,\dots,\alpha_n;s}^{k_1,\dots,k_n} \eps^s p^{\alpha_1}_{k_1}\dots p^{\alpha_n}_{k_n} e^{i \big(\sum\limits_{j=1}^n k_j\big) x}$ where the coef\/f\/icient $f_{\alpha_1,\dots,\alpha_n;s}^{k_1,\dots,k_n}$ is a polynomial in the indices $k_1,\dots,k_n$ of degree $s+d$. Moreover, the local functional~$\overline{f}$ corresponds to the constant term of the Fourier series of~$f$.

\subsection{Poisson structures} In what follows we will be interested in Hamiltonian systems of evolutionary PDEs. To this end we endow the space of local functionals with a Poisson structure of the form
\begin{gather*}\big\{\overline{f},\og\big\}_K :=\int \frac{\delta \overline{f}}{\delta u^\mu} K^{\mu\nu} \frac{\delta \og}{\delta u^\nu}dx,\qquad
K^{\mu\nu} = \sum\limits_{j\geq 0} K^{\mu\nu}_j \d_x^j, \qquad K^{\mu\nu}_j \in \hcA^{[-j+1]}.
\end{gather*}
Given that the variational derivative $\frac{\delta}{\delta u^\alpha} = \sum\limits_{k\geq 0} (-\d_x)^k \frac{\d}{\d u^\alpha_k}$ is the natural extension to local functionals of the f\/inite-dimensional notion of partial derivative, the above formula seems quite natural. The dif\/ferential operator $K$ is called a~Hamiltonian operator. Imposing antisymmetry and the Jacobi identity for the Poisson brackets obviously imposes conditions on the dif\/ferential operator $K^{\mu\nu}$. For instance
\begin{gather*}
K^{\alpha\beta}\big|_{\eps=0}=g^{\alpha\beta}(u)\d_x+b^{\alpha\beta}_\gamma(u)u^\gamma_x,
\end{gather*}
and the matrix $(g^{\alpha\beta})$ is symmetric (and, for simplicity, we will always assume it nondegenerate), the inverse matrix $(g_{\alpha\beta})$ def\/ines a f\/lat metric and the functions $\Gamma_{\alpha\beta}^\gamma(u):=-g_{\alpha\mu}(u)b^{\mu\gamma}_\beta(u)$ are the coef\/f\/icients of the Levi-Civita connection corresponding to this metric (see~\cite{DN83}).

We also def\/ine the Poisson bracket between a dif\/ferential polynomial $f\in \hcA$ and a local functional $\og \in \hLambda$ as follows{\samepage
\begin{gather*} \{f,\og\}_K = \sum_{s\geq 0} \frac{\d f}{\d u^\mu_s} \d_x^s\left( K^{\mu\nu} \frac{\delta \og}{\delta u^\nu}\right).\end{gather*}
Such formula, is compatible with the previous one in the sense that $\int \{f,\og\}_K dx = \{\overline{f},\og\}_K$.}

The action of a Miura transformation on the Poisson structure is given in terms of Hamiltonian operators as follows
\begin{gather*}K_{\widetilde u}^{\alpha\beta} = (L^*)^\alpha_\mu \circ K_u^{\mu\nu} \circ L^\beta_\nu,\end{gather*}
where $(L^*)^\alpha_\mu = \sum\limits_{s\geq 0} \frac{\partial \widetilde u^\alpha}{\partial u^\mu_s} \partial_x^s$, $L^\beta_\nu = \sum\limits_{s\geq 0} (-\partial_x)^s \circ\frac{\partial \widetilde u^\beta}{\partial u^\nu_s}$.

The following Darboux-type theorem states that, up to change of coordinates, there exists but one Poisson structure on the formal loop space.
\begin{Theorem}[\cite{Get02}]
There exist a Miura transformation bringing any Poisson bracket to the standard form \begin{gather*}K^{\mu \nu}= \eta^{\mu\nu} \d_x, \qquad \eta^{\mu\nu}\ \text{constant, symmetric and nondegenerate}.\end{gather*}
\end{Theorem}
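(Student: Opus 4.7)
The plan is a deformation argument in the parameter $\eps$, reducing the theorem to a cohomological vanishing statement for the constant Poisson bracket on the formal loop space.

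The first step is to normalize the leading order. At $\eps=0$, $K^{\alpha\beta}|_{\eps=0} = g^{\alpha\beta}(u)\d_x + b^{\alpha\beta}_\gamma(u)u^\gamma_x$ is a Dubrovin--Novikov hydrodynamic type bracket, and, as recalled in the excerpt, the Jacobi identity forces $(g_{\alpha\beta})$ to be a flat metric with Christoffel symbols $\Gamma^\gamma_{\alpha\beta} = -g_{\alpha\mu}b^{\mu\gamma}_\beta$. Choosing flat coordinates on $V$ by an $\eps$-independent change of variables $\widetilde u^\alpha=\widetilde u^\alpha(u)$ replaces $g^{\alpha\beta}$ by a constant symmetric nondegenerate matrix $\eta^{\alpha\beta}$ and makes $b^{\alpha\beta}_\gamma$ vanish, so one may assume $K = \eta^{\mu\nu}\d_x + O(\eps)$.

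I would then proceed by induction on the order in $\eps$. Suppose a Miura transformation has been constructed so that
\begin{gather*}
K^{\mu\nu} = \eta^{\mu\nu}\d_x + \eps^k R^{\mu\nu}_k + O\big(\eps^{k+1}\big),\qquad k\geq 1,
\end{gather*}
with the coefficients of $R_k$ differential polynomials of the degree dictated by $\deg(u^\alpha_j)=j$, $\deg(\eps)=-1$. Expanding skew-symmetry and the Jacobi identity to order $\eps^k$ yields a linear cocycle condition on $R_k$ for the Schouten--Nijenhuis differential $d$ associated to the Poisson structure $\eta^{\mu\nu}\d_x$. Moreover, an infinitesimal Miura transformation $u^\alpha\mapsto u^\alpha + \eps^k F^\alpha$ with $F^\alpha\in\hcA^{[k]}$ acts on $K$ at order $\eps^k$ by the coboundary $dF$ of $F$ viewed as an evolutionary vector field. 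The obstruction to removing $R_k$ by a Miura transformation is therefore precisely the class of $R_k$ in the appropriate graded piece of the second Poisson cohomology of $\eta^{\mu\nu}\d_x$ on $\hLambda$.

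The nontrivial input is the vanishing of this cohomology in the relevant bidegree, which is exactly the content of Getzler's theorem \cite{Get02}. Granted this, $R_k$ is a coboundary at each step, and the composition of the resulting infinitesimal Miura transformations converges as a formal power series in~$\eps$ to a transformation bringing $K$ to $\eta^{\mu\nu}\d_x$; the invertibility condition $\det\big(\d\widetilde u^\alpha|_{\eps=0}/\d u^\beta\big)\neq 0$ is automatic since the higher-order infinitesimal transformations do not affect the $\eps=0$ Jacobian. The main obstacle is entirely the cohomological computation: it is a delicate homological calculation on the complex of local poly-vector fields, exploiting the polynomial structure of $\hcA$ in the jet coordinates and a homotopy coming from the Euler-type grading that counts $\eps$-powers, and this is where all the actual work lives.
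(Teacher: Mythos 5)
The paper does not prove this statement at all: it is quoted verbatim from Getzler's article \cite{Get02} as a black box, so there is no in-paper argument to compare against. Your outline is nevertheless the correct and standard reduction, and it matches the structure of Getzler's actual proof: first pass to flat coordinates for the Dubrovin--Novikov leading term, then kill the higher $\eps$-corrections order by order, with the obstruction at each step living in the degree-graded second Poisson cohomology of the constant bracket $\eta^{\mu\nu}\d_x$, whose vanishing is the technical heart of \cite{Get02}. The only caveat is that your sketch is not self-contained --- you explicitly defer the cohomological vanishing, which is essentially the whole theorem, back to the very reference being cited --- but since the paper itself treats the result as imported, this is an honest account of where the work lies rather than a gap in your reasoning.
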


The standard Poisson bracket also has a nice expression in terms of the variables $p^\alpha_k$:
\begin{gather*}\big\{p^\alpha_k, p^\beta_j\big\}_{\eta \partial_x} = i k \eta^{\alpha \beta} \delta_{k+j,0}.\end{gather*}

\subsection{Integrable hierarchies} A Hamiltonian system is an evolutionary PDE of the form
\begin{gather*}\d_t u^\alpha = \big\{u^\alpha,\oh \big\}_K= K^{\alpha\nu} \frac{\delta \oh}{\delta u^\nu}, \qquad \oh\in \hLambda^{[0]},\end{gather*}
where $\oh$ is called the Hamiltonian of the system.

An integrable system, or an integrable hierarchy, is an inf\/inite system of Hamiltonian evolutionary PDEs
\begin{gather}\label{eq:intsys}
\d_{t^\beta_d} u^\alpha = \big\{u^\alpha,\oh_{\beta,d} \big\}_K=K^{\alpha\mu} \frac{\delta \oh_{\beta,d}}{\delta u^\mu},\qquad \oh_{\beta,d}\in \hLambda^{[0]},
\end{gather}
generated by Hamiltonians $\oh_{\alpha,d} \in \hLambda^{[0]}$, $\alpha=1,\dots,N$, $d\geq 0$ such that
\begin{gather*}\big\{\oh_{\alpha,i},\oh_{\beta,j}\big\}_K=0.\end{gather*}
As in the f\/inite-dimensional situation, the above Poisson-commutativity condition for the Hamiltonians is equivalent to the compatibility of the inf\/inite system of PDEs they generate. A formal solution to the above integrable hierarchy is given by a formal power series $u^\alpha(x,t^*_*;\eps) \in \mbC[[x,t^*_*,\eps]]$ satisfying all the equations of the hierarchy simultaneously.

\subsection{Tau-functions}\label{section:tau} Consider the Hamiltonian system (\ref{eq:intsys}). Let us assume that the Hamiltonian~$\oh_{1,0}$ generates the spatial translations:
\begin{gather*}
\d_{t^1_0}u^\alpha=K^{\alpha\mu}\frac{\delta\oh_{1,0}}{\delta u^\mu}=u^\alpha_x.
\end{gather*}
A {\it tau-structure} for the hierarchy~\eqref{eq:intsys} is a collection of dif\/ferential polynomials~$h_{\beta,q}\in\hcA^{[0]}_N$, $1\le\beta\le N$, $q\ge -1$, such that the following conditions hold:
\begin{enumerate}\itemsep=0pt
\item[1)] $K^{\alpha\mu}\frac{\delta\oh_{\beta,-1}}{\delta u^\mu}=0$, $\beta=1,\dots,N$,

\item[2)] the $N$ functionals $\oh_{\beta,-1}$ are linearly independent,

\item[3)] $\oh_{\beta,q}=\int h_{\beta,q}dx$, $q\ge 0$,

\item[4)] tau-symmetry:
$\frac{\d h_{\alpha,p-1}}{\d t^\beta_q}=\frac{\d h_{\beta,q-1}}{\d t^\alpha_p}$, $1\le\alpha,\beta\le N$, $p,q\ge 0$.
\end{enumerate}

Existence of a tau-structure imposes non-trivial constraints on a Hamiltonian hierarchy. A~Hamiltonian hierarchy with a f\/ixed tau-structure will be called {\it tau-symmetric}.

The fact that $\{\oh_{\alpha,p-1},\oh_{\beta,q}\}=0$ implies $\int\frac{\d h_{\alpha,p-1}}{\d t^\beta_q}dx=0$. Since $\frac{\d h_{\alpha,p-1}}{\d t^\beta_q} \in \hcA^{[1]}$ has no constant term, there exists a unique dif\/ferential polynomial $\Omega_{\alpha,p;\beta,q}\in\hcA^{[0]}$ such that $\d_x \Omega_{\alpha,p;\beta,q} = \frac{\d h_{\alpha,p-1}}{\d t^\beta_q}$ and $\left.\Omega_{\alpha,p;\beta,q}\right|_{u^*_*=0} = 0$ (and hence, in particular, $h_{\alpha,p-1} = \Omega_{\alpha,p;1,0}$).

Consider an arbitrary solution $u^\alpha=u^\alpha(x,t^*_*;\eps)\in\mbC[[x,t^*_*,\eps]]$ of our hierarchy~\eqref{eq:intsys}. Tau-symmetry guarantees that there exists a function $F\in\mbC[[t^*_*,\eps]]$ such that
\begin{gather*}
\big(\Omega_{\alpha,p;\beta,q}(u(x,t;\eps);u_x(x,t;\eps),\dots)\big)\big|_{x=0}=\frac{\d^2 F}{\d t^\alpha_p\d t^\beta_q},
\end{gather*} for any $1\le\alpha,\beta\le N$ and $p,q\ge 0$. The function $F(t^*_*;\eps)$ is called the tau-function of the given solution (in fact, for historical reasons, the tau-function should correspond to the exponential of~$F$, but we will ignore this distinction here, calling $F$ tau-function indistinctly). Tau-symmetric hierarchies hence have the property that the evolution along a particular solution of any of their Hamiltonian densities is subsumed under one single function~$F(t^*_*;\eps)$.

Given a tau-structure, its system of normal coordinates is the system of coordinates $\widetilde{u}^\alpha=\eta^{\alpha\mu} h_{\mu,-1}(u^*_*;\eps)$. The Hamiltonian operator takes the form $K^{\alpha\beta}_{\widetilde{u}} = \eta^{\alpha \beta} \d_x +O(\eps)$, $\eta$ being a~constant symmetric nondegenerate matrix.

A class of Miura transformations preserving the tau structure is given by \emph{normal Miura transformations}. Let $u^\alpha$ already be normal coordinates and $\cF(u^*_*;\eps)\in \hcA^{[-2]}$. The normal Miura transformation generated by $\cF$ is given by
\begin{gather}\label{eq:normal Miura}
\widetilde{u}^\alpha = u^\alpha + \eta^{\alpha\mu} \d_x \big\{ \cF, \oh_{\mu,0} \big\}_K.
\end{gather}
Then the Hamiltonian densities $\widetilde{h}_{\beta,q} = h_{\beta,q} + \d_x \{\cF, \oh_{\beta,q+1}\}_K$ form again a tau-structure and the coordinates $\widetilde{u}^\alpha$ are normal for it. Moreover, for any solution of the system, its tau-function changes in the following way under the normal Miura transformation:
\begin{gather*}\widetilde{F}(t^*_*;\eps) = F(t^*_*;\eps)+ \cF(u^*_*(x,t^*_*;\eps);\eps)\big|_{x=0}.\end{gather*}

\subsection{Example: the KdV hierarchy}\label{section:KdV}
The Korteweg--de Vries equation is the most well known example of integrable Hamiltonian PDEs. It is def\/ined on the formal loop space of a one-dimensional vector space $V=\mbC$, so we will suppress the Greek indices in all the above notations. The metric on $V$ is simply $\eta=1$. The Poisson structure is given by the Hamiltonian operator $K = \d_x$ (so it is in Getzler's standard form). The Hamiltonian is the following local functional in $\hLambda^{[0]}$:
\begin{gather*}\oh_{\mathrm{KdV}} = \int \left(\frac{u^3}{6} + \frac{\eps^2}{24} u u_{2}\right) dx.\end{gather*}
We can hence compute the Hamiltonian f\/low, i.e., the KdV equation
\begin{gather*}u_t = u u_1 + \frac{\eps^2}{24} u_3.\end{gather*}
The KdV equation is one of the f\/lows of an integrable hierarchy. There are various ways to compute the other f\/lows (or the other Hamiltonians) which compose such hierarchy (see for instance \cite{Dic03}). Here I choose to construct them by a recursive procedure that we discovered with A. Buryak in \cite{BR14} and which was not known before.

Let $g_{-1} = u \in \hcA^{[0]}$ and construct $\oh_i \in \hLambda^{[0]}$, $i\geq -1$ as $\oh_i = \int g_i dx$, where the dif\/ferential polynomials $g_i \in \hcA^{[0]}$ are produced by the recursive equation
\begin{gather*}g_{i+1} = (D-1)^{-1} \d_x^{-1}\big\{g_{i} ,\oh_{\mathrm{KdV}}\big\}, \qquad D:= \sum_{k\geq 0} (k+1) u_k \frac{\d}{\d u_k}.\end{gather*}
At each step, this equation produces a new Hamiltonian density whose Poisson bracket with $\oh_{\mathrm{KdV}} = \oh_1$ is $\d_x$-exact so that it makes sense to take the inverse $x$-derivative. The operator $D-1$ is also easily inverted on each monomial of the resulting dif\/ferential polynomial ($D$~on~$\hcA^{[0]}$ just counts the number of variables~$u^*_*$ and~$\eps$). The reader can promptly check that we obtain
\begin{gather*}
g_{-1} = u,\\
g_0= \frac{u^2}{2}+ \frac{\eps^2}{24} u_2,\\
g_1= \frac{u^3}{6} + \frac{\eps^2}{24} u u_{2} + \frac{\eps^4}{1152} u_4,\\
g_2= \frac{u^4}{24}+\eps^2\frac{u^2 u_2}{48}+\eps^4\left(\frac{7u_2^2}{5760}+\frac{u u_4}{1152}\right)+\eps^6 \frac{u_6}{82944}.
\end{gather*}
The dif\/ferential polynomials $g_i$ have the property that $\frac{\d g_{i}}{\d u} = g_{i-1}$.

A tau structure is obtained simply by taking $h_{i} = \frac{\delta \oh_{i+1}}{\delta u}$. Indeed we have $\oh_i = \og_i $ and tau-symmetry holds. The coordinate $u$ is already a normal coordinate for this tau-structure.

\subsection{Quantum Hamiltonian systems}\label{section:quantum hamiltonian systems}

We will need, f\/irst, to extend the space of dif\/ferential polynomials to allow for dependence on the quantization formal parameter $\hbar$.

The space of quantum dif\/ferential polynomials is $\hcA^\hbar := \hcA[[\hbar]]$, where the new formal variable~$\hbar$ has degree $\deg(\hbar) = -2$.

The space of quantum local functionals is given, similarly to the classical case, by $\hLambda^\hbar:= \hcA^\hbar/( \Im \d_x \oplus \mbC[[\eps,\hbar]])$.

The change of variables
\begin{gather*}
u^\alpha_j=\sum_{k\in\mbZ}(ik)^jp^\alpha_k e^{ikx},\end{gather*}
allows to express any quantum dif\/ferential polynomial $f=f(u^*_*;\eps,\hbar)\in \hcA^\hbar$ as a formal Fourier series in~$x$ with coef\/f\/icients that are (power series in~$\eps$ with coef\/f\/icients) in
$\mbC[p^*_{>0}][[p^*_{\leq 0}]][[\hbar]]$.

We can make $\mbC[p^*_{>0}][[p^*_{\leq 0}]][[\hbar]]$ a Weyl algebra by endowing it with the ``normal'' $\star$-product
\begin{gather*}
f \star g =f \bigg( e^{\sum\limits_{k>0} i \hbar k \eta^{\alpha \beta} \overleftarrow{\frac{\partial }{\partial p^\alpha_{k}}} \overrightarrow{\frac{\partial }{\partial p^\beta_{-k}}}}\bigg) g,
\end{gather*}
and the commutator $[f,g]:=f\star g - g \star f$.

\begin{Remark}
We remark here that our notation dif\/fer from what might constitute the standard in the (mathematical) physical literature: given two elements $f$ and $g$ in the Weyl algebra $\mbC[p^*_{>0}][[p^*_{\leq 0}]][[\hbar]]$ we have two dif\/ferent symbols for the commutative product $f\cdot g$ (or simply $fg$) and the quantum non-commutative star-product~$f \star g$, so we don't need any ``normal ordering'' symbol. Expressions of the normal ordered type ${:}fg{:}$ simply correspond to~$fg$ in our notations, as customary in deformation quantization.
\end{Remark}

These structures can then be translated to the language of dif\/ferential polynomials and local functionals. In~\cite{BR15} it was proved that, for any two dif\/ferential polynomials $f(x)=f(u^*,u^*_x,u^*_{xx},\dots;\eps,\hbar)$ and $g(y)=g(u^*,u^*_y,u^*_{yy},\dots;\eps,\hbar)$, we have
\begin{gather*}
f(x)\star g(y) =\sum_{\substack{n\geq 0\\ r_1,\dots,r_n\geq 0\\ s_1,\dots , s_n\geq 0}} \frac{\hbar^{n}}{n!} \frac{\partial^n f}{\partial u^{\alpha_1}_{s_1}\cdots \partial u^{\alpha_n}_{s_n}}(x)\left( \prod_{k=1}^n (-1)^{r_k} \eta^{\alpha_k\beta_k} \delta_+^{(r_k + s_k +1)}(x-y) \right) \\
\hphantom{f(x)\star g(y) =}{}
\times \frac{\partial^n g}{\partial u^{\beta_1}_{r_1}\cdots \partial u^{\beta_n}_{r_n}}(y),
\end{gather*}
where $\delta_+^{(s)}(x-y):= \sum\limits_{k\geq 0} (ik)^s e^{i k (x-y)}$, $s\geq 0$, is the positive frequency part of the $s$-th derivative of the Dirac delta distribution $\delta(x-y)= \sum\limits_{k\in \mbZ} e^{i k (x-y)}$
and
\begin{gather}
[f(x),g(y)]=\sum_{\substack{n\geq 1\\ r_1,\dots ,r_n\geq 0\\ s_1,\dots,s_n\geq 0}} \frac{(-i)^{n-1} \hbar^{n}}{n!} \frac{\partial^n f}{\partial u^{\alpha_1}_{s_1}\cdots \partial u^{\alpha_n}_{s_n}}(x) (-1)^{\sum\limits_{k=1}^n r_k} \left( \prod_{k=1}^n \eta^{\alpha_k \beta_k}\right) \nonumber\\
\hphantom{[f(x),g(y)]=}{} \times \sum_{j=1}^{2n-1+\sum\limits_{k=1}^n (s_k+r_k)} C_j^{s_1+r_1+1,\dots,s_n+r_n+1}\delta^{(j)}(x-y)\frac{\partial^n g}{\partial u^{\beta_1}_{r_1}\cdots \partial u^{\beta_n}_{r_n}}(y),\label{eq:quantum commutator}
\end{gather}
where
\begin{gather*}
C_j^{a_1,\dots,a_n}=
\begin{cases}
(-1)^{\frac{n-1+\sum a_i-j}{2}}\tC_j^{a_1,\dots,a_n},&\text{if $j=n-1+\sum\limits_{i=1}^n a_i\ (\mathrm{mod}\ 2)$},\\
0,&\text{otherwise},
\end{cases}
\end{gather*}
and
\begin{gather*}
\prod_{i=1}^k\Li_{-d_i}(z)=\sum_{j=1}^{k-1+\sum d_i}\tC^{d_1,\dots,d_k}_j\Li_{-j}(z), \qquad \Li_{-d}(z):=\sum_{k\ge 0}k^d z^k.
\end{gather*}
In particular, for $f\in \hcA^\hbar$ and $\og \in \hLambda^\hbar$, we get
\begin{gather*}
[f,\og]=\sum_{\substack{n\geq 1\\ r_1,\dots ,r_n\geq 0\\ s_1,\dots,s_n\geq 0}} \frac{(-i)^{n-1} \hbar^{n}}{n!} \frac{\partial^n f}{\partial u^{\alpha_1}_{s_1}\cdots \partial u^{\alpha_n}_{s_n}} (-1)^{\sum\limits_{k=1}^n r_k} \left( \prod_{k=1}^n \eta^{\alpha_k \beta_k}\right)\\
 \hphantom{[f,\og]=}{} \times \sum_{j=1}^{2n-1+\sum\limits_{k=1}^n (s_k+r_k)} C_j^{s_1+r_1+1,\dots,s_n+r_n+1} \partial_x^j \frac{\partial^n g}{\partial u^{\beta_1}_{r_1}\cdots \partial u^{\beta_n}_{r_n}}.
\end{gather*}
If $f$ and $\og$ are homogeneous, $[f,\og]$ is a non homogeneous element of $\hcA^\hbar$ of top degree equal to $\deg f + \deg \og - 1$. Taking the classical limit of this expression one obtains $\big(\frac{1}{\hbar}[\overline{f},\og]\big)|_{\hbar=0}=\{\overline{f}|_{\hbar=0},\og|_{\hbar=0}\}$, i.e., the standard hydrodynamic Poisson bracket on the classical limit of the local functionals.

Notice that, given $\og \in \hLambda^\hbar$, the morphism $[\cdot,\og]\colon \hcA^\hbar\to\hcA^\hbar$ is not a derivation of the commutative ring $\hcA^\hbar$ (while it is if we consider the non-commutative $\star$-product instead). This means that, while it makes sense to describe the simultaneous evolution along dif\/ferent time parameters $\t^\alpha_i$ (in the Heisenberg picture, to use the physical language) of a quantum dif\/ferential polynomial $f \in \hcA^\hbar$ by a system of the form
\begin{gather}\label{eq:quantum Hamiltonian system}
\frac{\partial f}{\partial t^\alpha_i} = \frac{1}{\hbar}\big[f,\oh_{\alpha,i}\big], \qquad \alpha=1,\dots,N, \qquad i=0,1,2,\dots,
\end{gather}
where $\oh_{\alpha,i}\in\hLambda^\hbar$ are quantum local functionals with the compatibility condition $[\oh_{\alpha,i},\oh_{\beta,j}]=0$, for $\alpha,\beta=1,\dots,N$, $i,j\geq 0$, one should refrain from interpreting it as the evolution induced by composition with $\frac{\d u^\gamma}{\d t^\alpha_i}=\frac{1}{\hbar} [u^\gamma,\oh_{\alpha,i}]$, as the corresponding chain rule does not hold: $\frac{\d f}{\partial t^\alpha_i} \neq \sum\limits_{k\geq 0}\frac{\d f}{\d u^\gamma_k}\d_x^k \big(\frac{\d u^\gamma}{\d t^\alpha_i}\big)$. This corresponds to the familiar concept that in quantum mechanics there are no trajectories in the phase space along which observables evolve.

A formal solution to the system (\ref{eq:quantum Hamiltonian system}) can be written in the form of an element in $\hcA^\hbar[[t^*_*]]$:
\begin{gather}
f^{t^*_*}(u^*_*;\eps,\hbar) := \exp\left(\sum_{\substack{1\leq\alpha\leq N\\ i\geq 1}} \frac{t^\alpha_i}{\hbar}\big[\cdot,\oh_{\alpha,i}\big]\right) f(u^*_*;\eps,\hbar)\nonumber\\
 \hphantom{f^{t^*_*}(u^*_*;\eps,\hbar)}{} = \left(\prod_{\substack{1\leq\alpha\leq N\\ i\geq 1}} \exp \left( \frac{t^\alpha_i}{\hbar}\big[\cdot,\oh_{\alpha,i}\big]\right) \right) f(u^*_*;\eps,\hbar),\label{eq:quantum solution} \end{gather}
where
\begin{gather*}
\exp\left(\frac{t^\alpha_i}{\hbar}\big[\cdot,\oh_{\alpha,i}\big]\right) := \sum_{k\geq 0} \frac{(t^\alpha_i)^k}{\hbar^k k!}\big[\big[\dots \big[\cdot,\oh_{\alpha,i}\big],\dots,\oh_{\alpha,i}\big],\oh_{\alpha,i}\big],
\end{gather*}
and $f\in \hcA^\hbar$ in the right-hand side of (\ref{eq:quantum solution}) is interpreted as the initial datum. Lifting the quantum commutator $[\cdot,\cdot]$ to $\hcA^\hbar[[t^*_*]]$, it is easy to check that $f^{t^*_*}$ satisf\/ies equation~\eqref{eq:quantum Hamiltonian system}. We do insist that $f^{t^*_*}(u^*_*;\eps,\hbar) \neq f((u^*_*)^{t^*_*},\eps,\hbar)$.

\subsection{Example: quantization of the KdV hierarchy}\label{section:qKdV}
We present here a quantization of the KdV hierarchy described in Section \ref{section:KdV}. The technique by which we will construct it is very general and works for basically all integrable systems we know, see Section~\ref{section:DRH}. We discovered it with A.~Buryak in~\cite{BR15}.

First we consider the classical KdV Poisson bracket and we replace it with the quantum commutator \eqref{eq:quantum commutator}. Then we take the classical KdV Hamiltonian
\begin{gather*}\oH_{\mathrm{KdV}} := \oh_{\mathrm{KdV}} = \int \left(\frac{u^3}{6} + \frac{\eps^2}{24} u u_{2}\right) dx\end{gather*}
and we consider it as an element of $\hLambda^\hbar$. In other words the quantum local functional $\oH_{\mathrm{KdV}}$ does not explicitly depend on the parameter $\hbar$. However this is not true for the other commuting quantum Hamiltonians. In order to write them all we use the technique from \cite{BR15}.

Let $G_{-1} = u \in \hcA^{\hbar}$ and construct $\oH_i \in \hLambda^{\hbar}$, $i\geq -1$ as $\oH_i = \int G_i dx$, where the dif\/ferential polynomials $G_i \in \hcA^{\hbar}$ are produced by the recursive equation
\begin{gather*}G_{i+1} = (D-1)^{-1} \d_x^{-1}\left(\frac{1}{\hbar}\big[G_{i} ,\oH_{\mathrm{KdV}}\big]\right), \qquad D:= \eps \frac{\d}{\d \eps} + 2\hbar \frac{\d}{\d \hbar} +\sum_{k\geq 0} u_k \frac{\d}{\d u_k}.\end{gather*}
At each step, this equation produces a new Hamiltonian density whose Poisson bracket with $\oH_{\mathrm{KdV}} = \oH_1$ is $\d_x$-exact so that it makes sense to take the inverse $x$-derivative. The operator $D-1$ is also easily inverted on each monomial of the resulting quantum dif\/ferential polynomial ($D$ on $\hcA^{\hbar}$ just counts the number of variables $u^*_*$, $\eps$ and, with weight~$2$, $\hbar$). The reader can promptly check that we obtain
\begin{gather*}
G_0=\frac{u^2}{2}+\frac{\eps^2}{24}u_{xx}-\frac{i\hbar}{24},\\
G_1=\frac{u^3}{6}+\frac{\eps^2}{24}u u_{xx}+\frac{\eps^4}{1152}u_{xxxx}-i\hbar\frac{u+u_{xx}}{24}-\frac{i\hbar\eps^2}{2880},\\
G_2=\frac{u^4}{24}+\eps^2\frac{u^2 u_2}{48}+\eps^4\left(\frac{7u_2^2}{5760}+\frac{u u_4}{1152}\right)+\eps^6 \frac{u_6}{82944}-i\hbar\frac{2uu_2+u^2}{48}-i\hbar\eps^2\frac{u+5u_2+4u_4}{2880}\\
\hphantom{G_2=}{} -\frac{i\hbar\eps^4}{120960}+(i\hbar)^2\frac{7}{5760}.
\end{gather*}
The dif\/ferential polynomials $G_i$ have the property that $\frac{\d G_{i}}{\d u} = G_{i-1}$.

\section[Cohomological f\/ield theories and the double ramif\/ication cycle]{Cohomological f\/ield theories\\ and the double ramif\/ication cycle}

In this section we introduce the notion of cohomological f\/ield theory, a family of cohomology classes on the moduli spaces of stable curves which is compatible with the natural maps and boundary structure~\cite{KM94}, and the double ramif\/ication cycle, another cohomology class represen\-ting (a~compactif\/ication of) the locus of curves whose marked points support a principal divisor. We will assume a certain familiarity with the geometry of the moduli space itself, referring to~\cite{Zvo06} for an excellent introductory exposition.

\subsection{Moduli space of stable curves} Here we just recall the main def\/initions and f\/ix the notations. In what follows, by curve we mean a compact connected Riemann surface, smooth but for a f\/inite number of nodes with local model $xy=0$, with $n$ distinct marked points labeled by $\{1,\dots,n\}$ and with genus $g$. A stable curve is a curve for which the number of biholomorphic automorphisms, keeping the marked points f\/ixed and sending nodes to nodes, is f\/inite.

This f\/initeness of the symmetry group can be translated into simple numerical conditions: consider each irreducible component of the nodal curve as a marked nodal curve itself. Suppose it carries $\nu$ of the original labeled markings plus the $\mu$ nodes connecting it with the other irreducible components and $\lambda$ further nodes that are double points. The numerical condition is then $2\gamma-2+\nu+\mu+2\lambda>0$.

Given two integers $g,n\ge 0$ such that $2g-2+n>0$, the moduli space of stable curves will be denoted by $\oM_{g,n}$. It is a $(3g-3+n)$-dimensional compact complex orbifold (or smooth Deligne--Mumford stack) parametrizing all possible stable curves with genus $g$ and $n$ marked points. Each point in $\oM_{g,n}$ represents an equivalence class of stable curves. Two stable curves with same $g$ and $n$ belong to the same class if between them there exists a biholomorphisms sending nodes to nodes and the $i$-th marked points to the $i$-th marked point, for all $i=1,\dots,n$.

On $\oM_{g,n}$ there is a universal curve $\overline{\mathcal{C}}_{g,n} \to \oM_{g,n}$, a morphism of orbifolds whose f\/iber over a~point $x\in \oM_{g,n}$ is isomorphic to the curve $C_x$ represented by that point. Each f\/iber $C_x$ hence has $n$ marked numbered points which, varying $x \in \oM_{g,n}$, form $n$ sections $s_i\colon \oM_{g,n} \to \overline{\mathcal{C}}_{g,n}$, $i=1,\dots,n$.

There are three natural morphisms among dif\/ferent moduli spaces. The forgetful morphism $\pi = \pi_m\colon \oM_{g,n+m} \to \oM_{g,n}$ forgets the last $m$ marked point on a curve (contracting all components of the curve that might have thus become unstable). Notice that $\pi\colon \oM_{g,n+1}\to \oM_{g,n}$ coincides with the universal curve $\overline{\mathcal{C}}_{g,n} \to \oM_{g,n}$.

The gluing morphism $\sigma=\sigma_{(g_1,n_1;g_2,n_2)}\colon \oM_{g_1,n_1+1}\times \oM_{g_2,n_2+1} \to \oM_{g,n}$, for $g_1+g_2=g$, $n_1+n_2=n$, glues two stable curves by identifying the last marked point of the f\/irst one with the last marked point of the second one, which become a node.

The loop morphism $\tau=\tau_{g,n+2}\colon \oM_{g,n+2}\to \oM_{g+1,n}$ identif\/ies the two last marked points on the same stable curve, hence forming a non-separating node which increases the genus by $1$.

The union of the images of the maps $\sigma$ and $\tau$ (for all possible stable choices of $(g_1,n_1;g_2,n_2)$ such that $g_1+g_2=g$, $n_1+n_2=n$) is a divisor in $\oM_{g,n}$ with normal crossings, called the boundary divisor. Each normal crossing of $k$ branches of the boundary divisor is the moduli space of stable curves with at least $k$ distinct nodes and a given distribution of marked points among their irreducible components.

On the total space of the universal curve there is a line bundle $\omega_{g,n} \to\overline{\mathcal{C}}_{g,n}$. On the smooth points of the f\/ibers $C_x$ of $\overline{\mathcal{C}}_{g,n}$ it is def\/ined as the relative cotangent (canonical) bundle with respect to the projection $\overline{\mathcal{C}}_{g,n}\to \oM_{g,n}$ and it extends canonically to the singular points to give an actual line bundle on the full $\overline{\mathcal{C}}_{g,n}$.

The tautological bundles $L_i \to \oM_{g,n}$, $i=1,\dots,n$ are def\/ined as $L_i = s_i^* \omega_{g,n}$. The f\/iber of $L_i$ at the point $x\in \oM_{g,n}$ is the cotangent line at the $i$-th marked point of the curve $C_x$ represented by $x$. The f\/irst Chern class of $L_i$ will be denoted by $\psi_i = c_1(L_i) \in H^2(\oM_{g,n},\mbQ)$.

The Hodge bundle $\mathbb{H}\to \oM_{g,n}$ is the rank $g$ vector bundle over $\oM_{g,n}$ whose f\/iber over $x\in \oM_{g,n}$ consists of the vector space of abelian dif\/ferentials on the curve $C_x$ represented by $x$. Its $g$ Chern classes will be denoted by $\lambda_i = c_i(\mathbb{H}) \in H^{2i}(\oM_{g,n},\mbQ)$, $i=1,\dots,g$, and $\Lambda(s) := \sum\limits_{i=0}^g s^i \lambda_i$.

\subsection{Cohomological field theories}\label{sect:CohFTs} Cohomological f\/ield theories (CohFTs) were introduced by Kontsevich and Manin in~\cite{KM94} to axiomatize the properties of Gromov--Witten classes of a given target variety. As it turns out this notion is actually more general, in the sense that not all CohFTs come from Gromov--Witten theory. The main idea is to def\/ine a family of cohomology classes on all moduli spaces $\oM_{g,n}$, for all stable choices of $g$ and $n$, parametrized by an $n$-fold tensor product of a vector space, in such a way that they are compatible with the natural maps between moduli spaces we considered above. Let us review their precise def\/inition.

Let $g,n \ge 0$ such that $2g-2+n>0$. Let $V$ a $\mbC$-vector space with basis $e_1,\dots,e_N$ and endowed with a symmetric nondegenerate bilinear form $\eta$. A cohomological f\/ield theory (CohFT) is a system of linear maps $c_{g,n}\colon V^{\otimes n} \to H^*(\oM_{g,n},\mbC)$ such that
\begin{itemize}\itemsep=0pt
\item[(i)] $c_{g,n}$ is $S_n$ equivariant (with respect to permutations of copies of $V$ in $V^{\otimes n}$ and marked points on the curves),
\item[(ii)] $c_{0,3}(e_1 \otimes e_\alpha \otimes e_\beta)=\eta_{\alpha \beta}$,
\item[(iii)] $\pi^*c_{g,n}(e_{\alpha_1}\otimes\dots \otimes e_{\alpha_n}) = c_{g,n}(e_{\alpha_1}\otimes\dots\otimes e_{\alpha_n}\otimes e_1)$,
where $\pi\colon \oM_{g,n+1}\to \oM_{g,n}$,
\item[(iv)] $\sigma^*c_{g_1+g_2,n_1+n_2}(e_{\alpha_1}\otimes\dots \otimes e_{\alpha_{n_1+n_2}}) = c_{g_1,n_1+1}(e_{\alpha_1}\otimes\dots\otimes e_{\alpha_{n_1}}\otimes e_\mu) \eta^{\mu \nu} c_{g_2,n_2+1}(e_\nu \otimes e_{\alpha_{n_1+1}}\otimes\dots\otimes e_{\alpha_{n_1+n_2}})$,
where $\sigma\colon \oM_{g_1,n_1+1}\times \oM_{g_2,n_2+1} \to \oM_{g_1+g_2,n_1+n_2}$,
\item[(v)] $\tau^*c_{g+1,n}(e_{\alpha_1}\otimes\dots \otimes e_{\alpha_{n}}) = c_{g,n+2}(e_{\alpha_1}\otimes \dots \otimes e_{\alpha_n}\otimes e_\mu \otimes e_\nu) \eta^{\mu\nu}$, where $\tau\colon \oM_{g,n+2}\to \oM_{g+1,n}$.
\end{itemize}

In case the last axiom (the loop axiom) is not satisf\/ied, we speak of \emph{partial} CohFT instead.

The potential of the CohFT is def\/ined as the generating function of the intersection numbers of the CohFT with psi-classes, namely,
\begin{gather*}
F(t^*_*;\eps):=\sum_{g\ge 0}\eps^{2g}F_g(t^*_*),
\end{gather*}
where
\begin{gather*}
F_g(t^*_*):=\sum_{\substack{n\ge 0\\2g-2+n>0}}\frac{1}{n!}\sum_{d_1,\dots,d_n\ge 0}\<\prod_{i=1}^n\tau_{d_i}(e_{\alpha_i})\>_g\prod_{i=1}^nt^{\alpha_i}_{d_i},\\
\<\tau_{d_1}(e_{\alpha_1})\cdots\tau_{d_n}(e_{\alpha_n})\>_g:=\int_{\oM_{g,n}}c_{g,n}\big({\otimes}_{i=1}^n e_{\alpha_i}\big)\prod_{i=1}^n\psi_i^{d_i}.
\end{gather*}

Some examples of CohFTs are:
\begin{itemize}\itemsep=0pt
\item Trivial CohFT: $V=\mbC$, $\eta = 1$, $c_{g,n} = 1$.

\item Hodge CohFT: $V=\mbC$, $\eta=1$, $c_{g,n} = \Lambda(s) = \sum\limits_{j=1}^g s^j \lambda_j$.

\item GW theory of a smooth projective variety $X$: $V=H^*(X,\mbC)$, $\eta = \text{Poincar\'e pairing}$, $c_{g,n}(\otimes_{i=1}^n e_{\alpha_i}) = p_* {\rm ev}^* (\otimes_{i=1}^n e_{\alpha_i}) q^\beta$, where $p\colon \oM_{g,n}(X,\beta)\to\oM_{g,n}$, ${\rm ev}\colon \oM_{g,n}(X,\beta)\to X^n$,
where $\oM_{g,n}(X,\beta)$ is the moduli space of stable maps $u$ from curves $C$ of genus $g$ with $n$ marked points to $X$ of degree $u_*[C]=\beta\in H^2(X,\mbZ)$. The projection~$p$ forgets the map~$u$ and the evaluation map~$ev$ evaluates the map~$u$ on the~$n$ marked points.

Notice that, in order to perform the pushforward along $p$, a notion of Poincar\'e duality must be used, which involves the virtual fundamental class of $\oM_{g,n}(X,\beta)$.

\item Witten's $r$-spin classes:
\begin{gather*}V=\mbC^{r-1},\qquad r\geq 2,\qquad \eta_{\alpha\beta}=\delta_{\alpha+\beta,r},
\qquad c_{g,n}(e_{a_1+1},\dots,e_{a_n+1}) \in H^*(\oM_{g,n};\mbQ) \end{gather*}
is a class of degree $\frac{(r-2)(g-1)+\sum\limits_{i=1}^n a_i}{r}$ if $a_i\in\{0,\dots,r-2\}$ are such that this degree is a non-negative integer, and vanishes otherwise. The class is constructed in~\cite{PV00} (see also~\cite{Ch06}) by pushing forward to $\oM_{g,n}$ Witten's virtual class on the moduli space of curves with $r$-spin structures. An $r$-spin structure on a smooth curve $(C,x_1,\dots,x_n)$ is an $r$-th root $L$ of the (twisted) canonical bundle $K(\sum a_i x_i)$ of the curve, where $a_i\in\{0,\dots,r-1\}$. Witten's class is the virtual class of $r$-spin structures with a holomorphic section (and vanishes when one of the $a_i$'s equals $r-1$), but we will not go into the details of the construction here. This is an example of CohFT that is not a Gromov--Witten theory.

\item Fan--Jarvis--Ruan--Witten (FJRW) theory: consider the data of $(W,G)$ where
\begin{itemize}\itemsep=0pt
\item $W=W(y_1,\dots,y_m)$ is a quasi-homogeneous polynomial with weights $w_1,\dotsc,w_m$ and degree $d$, which has an isolated singularity at the origin,
\item $G$ is a group of diagonal matrices $\gamma=(\gamma_1,\dotsc,\gamma_m)$ leaving the polynomial $W$ invariant and containing the diagonal matrix $\grj:=\big(e^{\frac{2 \ci \pi w_1}{d}},\dotsc,e^{\frac{2 \ci \pi w_m}{d}}\big)$.
\end{itemize}
The vector space $V$ is given by
\begin{gather*}V = \bigoplus_{\gamma \in G} \big(\cQ_{W_\gamma} \otimes d\underline{y}_\gamma\big)^G,\end{gather*}
where $W_\gamma$ is the $\gamma$-invariant part of the polynomial $W$, $\cQ_{W_\gamma}$ is its Jacobian ring, the dif\/ferential form $d\underline{y}_\gamma$ is $\bigwedge_{y_j \in (\mathbb{C}^m)^\gamma} dy_j$, and the upper-script $G$ stands for the invariant part under the group $G$. It comes equipped with a bidegree and a pairing, see \cite[equation~(4)]{LG/CY} or \cite[equation~(5.12)]{Polish1}.

Roughly, the cohomological f\/ield theory~\cite{FJRW2, FJRW} is constructed using virtual fundamental cycles of certain moduli spaces of stable orbicurves with one orbifold line bundle~$L_i$ for each variable $y_i$, $i=1,\dots,m$, such that for each monomial $W_j$ in $W$, $W_j(L_1,\dots,L_k) = K\big(\sum\limits_{i=1}^n x_i\big)$, where $K$ is the canonical bundle of the curve and $x_1,\dots,x_n$ are its marked points.
\end{itemize}

\subsection{Double ramif\/ication cycle} The double ramif\/ication cycle (or DR cycle) $\DR_g(a_1,\dots,a_n)$ is def\/ined as the Poincar\'e dual of the push-forward to the moduli space of stable curves $\oM_{g,n}$ of the virtual fundamental class of the moduli space of rubber stable maps to $\mathbb{P}^1$ relative to $0$ and $\infty$, with ramif\/ication prof\/ile (orders of poles and zeros) given by $(a_1,\dots,a_n)\in \mathbb{Z}^n$, where $\sum\limits_{i=1}^n a_i= 0$. Here ``rubber'' means that we consider maps up to the $\mbC^*$-action in the target $\mbP^1$ and a~positive/negative coef\/f\/i\-cient~$a_i$ indicates a pole/zero at the $i$-th marked point, while $a_i=0$ just indicates an internal marked point (that is not a zero or pole).

We view the DR cycle as a cohomology class in $H^{2g}(\oM_{g,n},\mbQ)$ dual to the homology class represented by some natural compactif\/ication of the locus, inside $\mathcal{M}_{g,n}$, formed by complex curves with marked points $x_1,\dots,x_n$ such that $\sum\limits_{i=1}^n a_i x_i$ is the divisor of the zeros and poles of a~meromorphic function. Sometimes, however, we will denote with the same symbol the Poincar\'e dual homology cycle instead. For instance in what follows we often say, and write in formulae, that we integrate over $\DR_g(a_1,\dots,a_n)$.

Recently Pixton conjectured an explicit formula for the DR cycle in terms of $\psi$-classes and boundary strata of $\oM_{g,n}$, which was then proven in \cite{JPPZ16}. The problem of expressing the DR cycle in terms of other tautological classes has been known since around 2000 as Eliashberg's problem, as Yakov Eliashberg posed it as a central question in symplectic f\/ield theory, and Pixton's formula provides a surprisingly explicit answer. We will not recall the full formula here, limiting ourselves to recalling instead that the class $\DR_g(a_1,\dots,a_n)$ belongs to $H^{2g}(\oM_{g,n},\mbQ)$, is tautological, and is a (non-homogeneous) polynomial class in the $a_i$'s formed by monomials of even degree and top degree equal to $2g$.

In fact, the restriction of the DR cycle to the moduli space $\mathcal{M}_{g,n}^{\text{ct}}\subset \oM_{g,n}$ of curves of compact type (i.e., those stable curves having only separating nodes, where a node is separating if removing it breaks the curve into two disjoint components) is described by the simpler Hain's formula \cite{Hai11}
\begin{gather*}H^{2g}\big(\mathcal{M}_{g,n}^{\text{ct}}\big)\ni \DR_g(a_1,\dots,a_n)\big|_{\mathcal{M}_{g,n}^{\text{ct}}} = \frac{1}{g!}\left( -\frac{1}{4} \sum_{J\subset \{1,\dots,n\}} \sum_{h=0}^g a_{J}^2 \delta^J_h \right)^{g},\end{gather*}
where
\begin{gather*}a_J := \sum_{j\in J} a_j , \qquad\delta^J_h = \left\{ \raisebox{-0.42\height}{\includegraphics{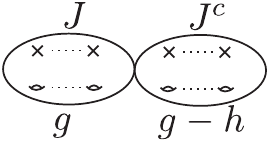}}
\right\}, \qquad \delta^{\{i\}}_0 = - \psi_i.\end{gather*}

From this formula it is apparent that $\left. \DR_g(a_1,\dots,a_n)\right|_{\mathcal{M}_{g,n}^{\text{ct}}}$ is a polynomial class in the $a_i$'s homogeneous of degree $2g$. This formula is useful for instance when computing the intersection in $\oM_{g,n}$ of $\DR_g(a_1,\dots,a_n)$ with the class $\lambda_g$, since the latter vanishes outside $\mathcal{M}_{g,n}^{\text{ct}}$ anyway.

We close this section remarking that the $DR$ cycle can, in fact, be seen as a partial CohFT with respect to the inf\/inite-dimensional $\mbC$-vector space $V$ generated by $\{e_i\}_{i\in \mbZ}$ with metric given by $\eta(e_i,e_j) = \delta_{i+j,0}$ and unit $e_0$, via the identif\/ication $\DR_g(a_1,\dots,a_n) = c_{g,n}(e_{a_1}\otimes\dots\otimes e_{a_n})$.

\section[The Dubrovin--Zhang hierarchy of a cohomological f\/ield theory]{The Dubrovin--Zhang hierarchy\\ of a cohomological f\/ield theory}

Dubrovin and Zhang \cite{DZ05} give a construction of an integrable hierarchy starting from a semisimple cohomological f\/ield theory. A CohFT is said to be semisimple when the associative algebra with structure constants $\eta^{\alpha \mu} \frac{\d F_0}{\d t^\mu_0 \d t^\beta_0 \d t^\gamma_0}\big|_{t^*_{>0}=0} $ is semisimple generically with respect to the variables $t^*_0$.

Dubrovin and Zhang's framework gives, among other things, the language for stating far reaching generalizations of Witten's conjecture \cite{Wit91}. In this section we brief\/ly present their construction (following the clear exposition of \cite{BPS12a,BPS12b}) and explain its relation to Witten-type conjectures.

\subsection{DZ hierarchy} Consider the potential $F(t^*_*;\eps)=\sum\limits_{g\ge 0}\eps^{2g}F_g(t^*_*)$ of a semisimple CohFT. Denote $\Omega_{\alpha,p;\beta,q}(t^*_*;\eps)= \frac{\d^2 F(t^*_*;\eps)}{\d t^\alpha_p \d t^\beta_q } =\sum\limits_{g \ge 0} \Omega^{[2g]}_{\alpha,p;\beta,q}(t^*_*) \eps^{2g}$.

The construction starts in genus $0$ and we use variables $v^*_*$ for the fomal loops space. Here the hierarchy is given by the following Hamiltonian densities and Poisson structure:
\begin{gather*} h_{\alpha,p}(v^*) = \Omega^{[0]}_{\alpha,p+1;1,0}(t^*_0=v^*,0,0,\dots),\\
\big(K^{\DZ}_v\big)^{\alpha\beta}=\eta^{\alpha\beta}\d_x.
\end{gather*}

Commutativity of these Hamiltonians is a simple consequence of the fact that the nodal divisors $D_{(12|34)}$ and $D_{(13|24)}$ are equivalent in $H^*(\oM_{0,4},\mbQ)$,
\begin{gather*}
\raisebox{-0.42\height}{\includegraphics{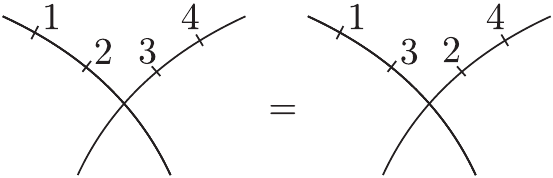}}
\end{gather*}
Also, these Hamiltonian densities form a tau-structure by def\/inition.

Let then $v^\alpha(x,t^*_*)$, $\alpha,1,\dots,N$, be the solution to the above integrable hierarchy with initial datum $v^\alpha(x,t^*_*=0) = \delta^\alpha_1 x$. We have, see, e.g.,~\cite{BPS12a},
\begin{gather*}F_g(t^*_0,t^*_1,\dots) = F_g\big(P^*_0(v^*_0,\dots,v^*_{3g-2}),\dots,P^*_{3g-2}(v^*_0,\dots,v^*_{3g-2}),0,\dots \big)\big|_{x=0},\end{gather*}
where $P^*_*$ are in general rational functions, not dif\/ferential polynomials.

Consider the change of coordinates{\samepage
\begin{gather*}w^\alpha(v^*_*;\eps)=v^\alpha + \sum_{g\geq1}\eps^{2g} \frac{\d^2 F_g(v^*_0,\dots,v^*_{3g-2})}{\d t^\alpha_0 \d x}.\end{gather*}
It is a normal, but non-Miura, transformation, because the $P^*_*$ are not dif\/ferential polynomials.}

The full Dubrovin--Zhang (DZ) hierarchy is just the transformation of the above genus $0$ hierarchy with respect to the above non-Miura change of coordinates $w^\alpha = w^\alpha(v^*_*;\eps)$, whose inverse we denote by $v^\alpha = v^\alpha(w^*_*;\eps)$. In fact, in order to obtain a tau-structure, we want to add a $\d_x$-exact term to the transformed Hamiltonians, as prescribed for a normal (albeit non-Miura) transformation, as explained in Section~\ref{section:tau}:
\begin{gather*} h^{\DZ}_{\alpha,p}(w^*_*;\eps) := h_{\alpha,p}(v^*(w^*_*;\eps)) + \sum_{g\geq 1}\eps^{2g} \frac{\d^2 F_g(v^*_*(w^*_*;\eps))}{\d t^\alpha_{p+1} \d x},\\
\big(K^{\DZ}_w\big)^{\alpha\beta}=(L^*)^\alpha_\mu \circ \big(K_v^{\DZ}\big)^{\mu\nu} \circ L^\beta_\nu,
\end{gather*}
where $(L^*)^\alpha_\mu = \sum\limits_{s\geq 0} \frac{\partial w^\alpha}{\partial v^\mu_s} \partial_x^s$, $L^\beta_\nu = \sum\limits_{s\geq 0} (-\partial_x)^s \circ\frac{\partial w^\beta}{\partial v^\nu_s}$.

The DZ hierarchy is an integrable tau-symmetric hierarchy whose tau-function for the solution with initial datum $w^\alpha(x,t^*_*=0;\eps) = \delta^\alpha_1 x$ (called the topological solution) is, \emph{by construction}, the partition function of the CohFT.

The technical hypothesis of semisimplicity of the CohFT is used in the proof that, in spite of the fact that the transformation $v^* \mapsto w^*$ is not Miura, the Hamiltonian densities $h^{\DZ}_{\alpha,p}(w^*_*;\eps) $ and Poisson structure $(K^{\DZ}_w)^{\alpha\beta}$ are still of the correct dif\/ferential polynomial class.

\subsection{Witten's conjecture and its generalizations} In \cite{Wit91}, Witten conjectured that the partition function of the trivial CohFT is the tau-function of the topological solution to the KdV hierarchy.

Another way to state this, in light of the last section, is that the DZ hierarchy of the trivial CohFT is the KdV hierarchy.

This conjecture was proved by Kontsevich in \cite{Kon92} and, after that, many similar conjectures and results appeared in the literature, consisting in identifying and controlling the DZ hierarchy of a given CohFT. For instance in \cite{FSZ10}, Faber--Shadrin--Zvonkine proved that the DZ hierarchy of Witten's $r$-spin class (for $r\ge 2$ a CohFT that was def\/ined in \cite{PV00}) coincides with the $r$-KdV Gelfand--Dickey hierarchy, another well known tau symmetric integrable system.

\section{Double ramif\/ication hierarchies}\label{section:DRH}

In this section we introduce the main subject of this paper, the double ramif\/ication hierarchy construction. We will give a self-contained exposition of our main results, including an account of the progress made in proving the conjecture that the DR and DZ hierarchy are actually equivalent up to a normal Miura transformation, and our quantization technique for $(1+1)$-dimensional integrable f\/ield theories.

\subsection{The main idea} Symplectic f\/ield theory \cite{EGH00} is a large project attempting to provide a unif\/ied view on established pseudoholomorphic curve theories in symplectic topology like symplectic Floer homology, contact homology and Gromov--Witten theory, leading to numerous new applications, inclu\-ding a construction of quantum integrable systems from the geometry of the moduli spaces of pseudoholomorphic curves in symplectic cobordisms between contact manifolds.

In a sense, the double ramif\/ication hierarchy arises from completely analogous constructions in the complex algebraic setting and with the axiomatized language of cohomological f\/ield theories replacing curve counting in target varieties. In this sense the double ramif\/ication hierarchy is a quantum integrable system, even if A.~Buryak introduced f\/irst its classical version in~\cite{Bur15}.

Given a cohomological f\/ield theory $c_{g,n}\colon V^{\otimes n} \to H^*(\oM_{g,n},\mbC)$, at the heart of the construction for the classical hierarchy lie its intersection numbers with the DR cycle, the powers of one psi-class and the top Hodge class $\lambda_g$:
\begin{gather*}P_{\alpha,d;\alpha_1,\dots,\alpha_n}^{g;a_1,\dots,a_n}=\int_{\DR_g\big({-}\sum a_i,a_1,\dots,a_n\big)}\lambda_g \psi_1^d c_{g,n+1}\big(e_\alpha\otimes\otimes_{i=1}^n e_{\alpha_i}\big).\end{gather*}
This is all the geometric content used in the def\/inition of the DR hierarchy.

These intersection numbers are collected into generating functions $g_{\alpha,d}$ depending on the indices $\alpha=1,\dots,N$ and $d\ge 0$ which have the form of dif\/ferential polynomials (see next section). The dif\/ferential polynomials $g_{\alpha,d}$ directly play the role of Hamiltonian densities for a classical integrable system. The Poisson structure, on the other hand, and contrary to what happens for the DZ hierarchy, does not depend on the cohomological f\/ield theory and is always in Getzler's standard form.

Notice that, because of the presence of the class $\lambda_g$, Hain's formula is suf\/f\/icient to compute the above intersection numbers. This advantage if often exploited in explicit computations.

\subsection{DR hierarchy Hamiltonians} Because of the polynomiality properties of the DR cycle, $P_{\alpha,d;\alpha_1,\dots,\alpha_n}^{g;a_1,\dots,a_n}$ is a homogeneous polynomial in $a_1,\dots,a_n$ of degree $2g$. So, if we write it as such, \begin{gather*}P_{\alpha,d;\alpha_1,\dots,\alpha_n}^{g;a_1,\dots,a_n} = \sum_{\sum b_i= 2g} \widetilde{P}_{\alpha,d;\alpha_1,\dots,\alpha_n}^{g;b_1,\dots,b_n} a_1^{b_1}\cdots a_n^{b_n}, \end{gather*}
we can give the following def\/inition
\begin{gather*} g_{\alpha,d}:=\sum_{\substack{g\ge 0,n\ge 0\\2g-1+n>0}}\frac{(-\eps^2)^g}{n!}\sum_{a_1,\dots,a_n\in\mbZ}
P_{\alpha,d;\alpha_1,\dots,\alpha_n}^{g;a_1,\dots,a_n}\ p^{\alpha_1}_{a_1}\cdots p^{\alpha_n}_{a_n}\ e^{ix\sum a_i} \\
\hphantom{g_{\alpha,d}}{} = \sum_{\substack{g\ge 0,n\ge 0\\2g-1+n>0}}\frac{(-\eps^2)^g}{n!}\sum_{\sum b_i = 2g}
\widetilde{P}_{\alpha,d;\alpha_1,\dots,\alpha_n}^{g;b_1,\dots,b_n}\ u^{\alpha_1}_{b_1}\cdots u^{\alpha_n}_{b_n},
\end{gather*}
and hence we have two expressions for the DR Hamiltonian densities, in variables $p^*_*$ and $u^*_*$ respectively. The second line, in particular, is clearly a dif\/ferential polynomial in $\hcA^{[0]}$.

\begin{Remark}Let us see, as an example using the def\/inition and the pullback property of cohomological f\/ield theories with respect to the forgetful morphism $\pi\colon \oM_{g,n+1}\to \oM_{g,n}$, how to actually compute $\og_{1,0}$ for any CohFT. The involved intersection numbers are $P_{1,0;\alpha_1,\dots,\alpha_n}^{g,a_1,\dots,a_n}$, for $n>1$ and $\sum a_i = 0$ (this last condition is the ef\/fect of integrating $g_{1,0}$ in $dx$), and these are integrals over $\oM_{g,n+1}$ of a class pulled back from $\oM_{g,n}$. Namely, remembering that the DR cycle is a partial CohFT too, so it has the pullback property with respect to the coef\/f\/icient $a=0$, we have $\lambda_g \DR_g(0,a_1,\dots,a_n)c_{g,n+1}(e_1\otimes \otimes_{i=1}^n e_{\alpha_i}) = \pi^* (\lambda_g \DR_g(a_1,\dots,a_n) c_{g,n+1}(\otimes_{i=1}^n e_{\alpha_i}))$. This means that $P_{1,0;\alpha_1,\dots,\alpha_n}^{g,a_1,\dots,a_n}$ vanishes whenever the map $\pi$ exists, i.e., unless $g=0, n=2$, for which we have $P_{1,0;\mu,\nu}^{0,a,-a} = \eta_{\mu \nu}$. This gives
\begin{gather}\label{eq:string}
g_{1,0} = \int \frac{1}{2}\eta_{\mu \nu} u^\mu u^\nu dx.
\end{gather}
\end{Remark}

Commutativity $\{\og_{\alpha,p},\og_{\beta,q}\}=0$ with respect to the standard Hamiltonian operator $(K^\DR)^{\mu \nu}$ $=\eta^{\mu \nu} \d_x$ (we omit the subscript $K$ in $\{\cdot,\cdot\}_K$ when $K$ is in Getzler's standard form), was proved in~\cite{Bur15}. Let's give an idea of the proof.

In genus $0$, where the DR cycle is equal to $1$, this equation is basically equivalent to the equivalence of boundary divisors $D_{(12|34)}$ and $D_{(13|24)}$ in $H^*(\oM_{0,4},\mbQ)$. The genus $0$ argument can be ported to higher genus by working with images of the curves of the DR cycle with respect to the meromorphic function (or more precisely rubber map to $\mbP^1$) that is def\/ined on them. This is a general fact: we often f\/ind that genus $0$ properties of the DZ hierarchy have all genera analogues on the DR hierarchy side.

Making this argument precise, one gets to prove the following equation for products of double ramif\/ication cycles. For a subset $I=\{i_1,i_2,\dots\}, i_1<i_2<\cdots$, of the set $\{1,\dots,n\}$ let $A_I:=(a_{i_1},a_{i_2},\dots)$.
Suppose the set~$\{1,2,\dots,n\}$ is divided into two disjoint subsets, $I\sqcup J=\{1,2,\dots,n\}$, in such a way that $\sum\limits_{i\in I} a_i>0$. Let us denote by $\DR_{g_1}(0_{x_1},A_I,-k_1, \dots, -k_p)\boxtimes \DR_{g_2}(0_{x_2},A_J, k_1, \dots, k_p)$ the cycle in~$\oM_{g_1+g_2+p-1,n+2}$ obtained by gluing the two double rami\-f\/i\-ca\-tion cycles at the marked points labeled by the positive integers~$k_1,\dots,k_p$. Here $0_{x}$ indicates a coef\/f\/icient $0$ at the marked point~$x$. Then
\begin{gather}
\sum \frac{\prod\limits_{i=1}^p k_i}{p!}
\DR_{g_1}(0_{x_1},A_I,-k_1, \dots, -k_p)\boxtimes \DR_{g_2}(0_{x_2},A_J, k_1, \dots, k_p)\nonumber\\
\qquad{} - \sum \frac{\prod\limits_{i=1}^p k_i}{p!}
\DR_{g_1}(0_{x_2},A_I,-k_1, \dots, -k_p) \boxtimes \DR_{g_2}(0_{x_1},A_J, k_1, \dots, k_p)=0.\label{eq:DRcomm}
\end{gather}
The sums are over $I$, $J$, $p>0$ $k_1>0,\dots,k_p>0$, $g_1\ge 0$, $g_2\ge 0$.

If we intersect this relation with the class $\lambda_g$ (which kills the terms with $p>1$) and with the $\psi$-classes and CohFT, and form the corresponding generating function, we obtain precisely \begin{gather*}\sum_{k>0}\left( k \eta^{\mu\nu}\frac{\d \og_{\alpha,p}}{\d p^\mu_k} \frac{\d \og_{\beta,q}}{\d p^\nu_{-k}} - k \eta^{\mu\nu}\frac{\d \og_{\beta,q}}{\d p^\mu_k} \frac{\d \og_{\alpha,p}}{\d p^\nu_{-k}}\right)=\big\{\og_{\alpha,p},\og_{\beta,q}\big\}=0.\end{gather*}

In \cite{Bur15} Buryak computed the f\/irst two examples of DR hierarchies. For the trivial CohFT he found the KdV hierarchy, the same result as for the DZ hierarchy. For the Hodge CohFT he found the Intermediate Long Wave hierarchy~(ILW). When comparing this second case with the DZ hierarchy he realized that, once more, the integrable systems were the same, but this time he had to perform a Miura transformation to match them. This motivated him to propose the following conjecture.

\begin{Conjecture}[weak DR/DZ equivalence \cite{Bur15}]
Given a semisimple CohFT, the associated DZ and DR hierarchy coincide up to a Miura transformation.
\end{Conjecture}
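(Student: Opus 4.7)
The plan is to proceed in three stages that isolate the role of the Poisson structure, of the genus~$0$ data, and of the higher-genus corrections. First I would invoke Getzler's Darboux theorem to find a Miura transformation bringing the DZ Poisson bracket $(K^{\DZ}_w)^{\alpha\beta}$ to the standard form $\eta^{\alpha\beta}\partial_x$, which is already the Poisson structure of the DR hierarchy. This reduces the conjecture to matching two tau-symmetric hierarchies on the same formal loop space with the same standard Hamiltonian operator, and by passing to normal coordinates for each tau-structure the remaining freedom is exactly the group of normal Miura transformations~\eqref{eq:normal Miura}, parametrized by a single generator $\cF\in\hcA^{[-2]}$.

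Next I would match the two hierarchies at genus $0$. Because $\DR_0(a_1,\dots,a_n)=1$, the densities $g_{\alpha,d}|_{\eps=0}$ reduce to intersections on $\oM_{0,n+1}$ of $c_{0,n+1}$ with a single psi-class, which is exactly the datum encoded in $\Omega^{[0]}_{\alpha,d+1;1,0}$ used to define $h_{\alpha,d}(v)$. Thus at the classical level both hierarchies coincide with the principal hierarchy of the Frobenius manifold of the CohFT, and the sought transformation is trivial modulo $\eps^2$.

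For the higher-genus corrections the key observation is that both hierarchies admit a canonical topological solution whose tau-function is itself a generating series of intersection numbers on $\oM_{g,n}$. The natural ansatz is to define $\cF$ so that the transformation rule $\widetilde F = F + \cF|_{x=0}$ recalled at the end of Section~\ref{section:tau} matches the two topological tau-functions order by order in $\eps^{2g}$. Semisimplicity then enters through Givental--Teleman reconstruction, which presents any such CohFT as an $R$-matrix action on a product of trivial (and Hodge-type) CohFTs, precisely the cases where Buryak already verified the identification~\cite{Bur15}; one then has to show that the $R$-matrix action intertwines the DR and DZ constructions through a single normal Miura transformation.

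The hard part, as emphasized in the paper, is promoting the match of tau-functions along the topological solution to equality of \emph{all} Hamiltonian densities under one and the same $\cF$. A tau-symmetric hierarchy is essentially determined by its topological tau-function, but this reconstruction is sharp only when enough intersection numbers are nondegenerate, and making it work uniformly in $g$ amounts to verifying a finite family of relations in the tautological ring of each $\oM_{g,n}$ with $n\le 2g$, as pointed out in the discussion of~\cite{BDGR17}. Producing these relations, presumably from Pixton's formula for $\DR_g$ together with a careful control of the $R$-matrix action on Hain's formula over the compact-type locus, is the remaining core difficulty.
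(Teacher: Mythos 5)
The statement you are addressing is a \emph{conjecture}, and the paper offers no proof of it: it explicitly says that the general proof ``is the object of an ongoing work'' and that only examples and special cases (trivial CohFT, Hodge, low-rank $r$-spin, $\mbP^1$, low genus) have been verified. Your text is therefore not a proof but a research-program sketch, and you concede as much in your final paragraph when you defer ``the remaining core difficulty'' to unverified tautological relations. That concession is exactly where the mathematical content of the conjecture lives, so nothing has actually been established. Within the sketch itself there are also specific soft spots. First, reducing to the standard Poisson form via Getzler's theorem and then to normal coordinates correctly identifies the residual freedom as normal Miura transformations generated by a single $\cF\in\hcA^{[-2]}$, but it does not tell you \emph{which} $\cF$; the paper's key structural input here is the second selection rule of Proposition~\ref{prop:selection rule} (DR correlators vanish when $\sum d_i\le 2g-2$, i.e., they cannot be ``too small''), which is what makes the candidate $\cF$ \emph{unique} (Theorem~\ref{thm:unique}) and upgrades the weak conjecture to the strong one $F^{\mathrm{red}}=F^{\DR}$. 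Your ansatz of ``matching tau-functions order by order in $\eps^{2g}$'' has no a priori guarantee that a single differential-polynomial $\cF$ accomplishes the match at all orders simultaneously --- that simultaneous solvability is precisely the conjecture.

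Second, your appeal to Givental--Teleman reconstruction is the step most likely to fail as stated: the double ramification cycle is not known to transform in a controlled way under the $R$-matrix action, and intertwining the DR construction with that action is an open problem, not a routine verification. This is why the paper's actual strategy (in \cite{BDGR16b,BDGR17}) bypasses the $R$-matrix entirely and instead writes both $F^{\DZ}$ and $F^{\mathrm{red}}$ as sums over decorated stable graphs, reducing the equivalence to finitely many relations (one per partition of $2g$) in the tautological ring of $\oM_{g,n}$ with $n\le 2g$. Finally, the assertion that ``a tau-symmetric hierarchy is essentially determined by its topological tau-function'' is doing a lot of unjustified work: even granting equality of tau-functions, one must still show that the normal Miura transformation carries every Hamiltonian density of one hierarchy to the corresponding density of the other, which is a separate statement. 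In short, your outline is broadly consistent with the program the paper describes, but it proves nothing the paper does not already know to be open.
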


\subsection{Recursion relations} In \cite{BR14}, using results about the intersection of a $\psi$-class with the DR cycle from \cite{BSSZ15}, by analogy with \cite{FR10}, we found the following recursion equations among the DR Hamiltonian densities.

\begin{Theorem}[\cite{BR15}]\label{theorem:recursion classical}
For all $\alpha=1,\dots,N$ and $p=-1,0,1,\dots$, let $g_{\alpha,-1}=\eta_{\alpha\mu}u^\mu$. We have
\begin{gather}\label{eq:first recursion classical}
\partial_x (D-1) g_{\alpha,p+1} = \big\{ g_{\alpha,p} , \og_{1,1} \big\},
\\\label{eq:second recursion classical}
\partial_x \frac{\partial g_{\alpha,p+1}}{\partial u^\beta} = \big\{g_{\alpha,p}, \og_{\beta,0} \big\},
\end{gather}
where $D:=\eps\frac{\partial}{\partial\eps} + \sum\limits_{s\ge 0} u^\alpha_s\frac{\partial}{\partial u^\alpha_s}$.
\end{Theorem}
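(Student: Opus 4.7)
The plan is to derive both recursions from a single geometric formula, due to Buryak--Shadrin--Spitz--Zvonkine \cite{BSSZ15}, that expresses the intersection of a $\psi$-class with the double ramification cycle in terms of boundary strata where the DR cycle splits according to a redistribution of the ramification profile. Concretely, their formula gives $a_1\psi_1\cdot\DR_g(a_0,a_1,\dots,a_n)$ as a sum, over ways of cutting the curve along non-separating and separating nodes with transferred ramifications $k_1,\dots,k_p$, of products of two DR cycles with weight $\prod k_i/p!$. The geometric input mirrors the one used to prove commutativity, but now with one factor of $\psi$ converted into a cutting.

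First I would rewrite the left-hand sides of \eqref{eq:first recursion classical} and \eqref{eq:second recursion classical} in Fourier variables $p^\alpha_k$. Using the expression
\begin{gather*}
g_{\alpha,p+1}=\sum_{g,n}\frac{(-\eps^2)^g}{n!}\sum_{a_1,\dots,a_n}P^{g;a_1,\dots,a_n}_{\alpha,p+1;\alpha_1,\dots,\alpha_n}p^{\alpha_1}_{a_1}\cdots p^{\alpha_n}_{a_n}e^{ix\sum a_i},
\end{gather*}
the operator $\partial_x(D-1)$ translates on each monomial into multiplication by $i(\sum_i a_i)(n+2g-1)$, and the homogeneity relation $\sum b_i=2g$ for $\widetilde P$ lets me recognize this as the effect of converting $\psi_1^{p+1}$ into $a_0\psi_1^{p+1}=-(\sum a_i)\psi_1^{p+1}$ on the DR integrand, up to the combinatorial factor coming from the number of marked points. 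The operator $\partial_x\partial/\partial u^\beta$ on the other side is handled similarly: it extracts one $p^\beta_k$ factor and multiplies by the total momentum, which on the geometric side inserts an $e_\beta$ at a marked point carrying coefficient $a_0$.

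Next I would apply the BSSZ formula to $a_0\psi_1\cdot\DR_g(a_0,a_1,\dots,a_n)$, intersect with $\lambda_g\psi_1^p c_{g,n+1}(e_\alpha\otimes\bigotimes e_{\alpha_i})$, and use the splitting axioms of the CohFT and the multiplicativity $\lambda_g=\lambda_{g_1}\lambda_{g_2}$ on the boundary strata with a separating node (non-separating strata contribute zero because $\lambda_g$ vanishes when pulled back under the loop axiom, i.e.\ outside compact type). This reduces the sum to $p=1$ in the cutting, exactly as in the commutativity argument. The glued strata produce precisely the bilinear expression in Fourier variables that corresponds to the Poisson bracket $\{g_{\alpha,p},\og_{1,1}\}$ for equation \eqref{eq:first recursion classical} (since the $\psi_1^1$ surviving on the component with $a_0$ becomes the extra $\psi$ defining $\og_{1,1}$) and to $\{g_{\alpha,p},\og_{\beta,0}\}$ for equation \eqref{eq:second recursion classical} (when we insert $e_\beta$ at the extra marked point on the other component and use no $\psi$ there). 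The base case $g_{\alpha,-1}=\eta_{\alpha\mu}u^\mu$ and the formula \eqref{eq:string} for $\og_{1,0}$ provide the initial data and consistency check for $p=-1$.

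The main obstacle is translating the equality of tautological classes in $\oM_{g,n+1}$ into the claimed equality of differential polynomials in $\hcA^{[0]}$: one has to keep track of symmetrization factors, the shift between the distinguished marked point with coefficient $a_0=-\sum a_i$ and the remaining ones, the correct matching of psi-powers on the two components after cutting, and the bookkeeping of $\eps^{2g}$ when the genus splits as $g_1+g_2=g$. These combinatorial manipulations are of the same type as in the commutativity proof but require additionally the precise BSSZ weighting $\prod k_i/p!$ and the observation that the classical limit eliminates all $p>1$ contributions through $\lambda_g$. Once this translation is carried out term-by-term in the Fourier expansion, both recursions follow simultaneously from the two natural ways of reading the cut strata (as contributing to $\og_{1,1}$ via the surviving $\psi$, or to $\og_{\beta,0}$ via the inserted $e_\beta$).
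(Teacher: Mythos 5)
Your proposal follows essentially the same route as the paper's (which it attributes to~\cite{BR14,BR15}): the Buryak--Shadrin--Spitz--Zvonkine formula for $\psi_1$ on the double ramification cycle, cut down by $\lambda_g$ to single-edge, compact-type splittings, and then translated monomial-by-monomial into the Poisson brackets, with $\partial_x(D-1)$ correctly identified as multiplication by $i\big(\sum a_i\big)(2g-2+(n+1))$. The remaining work you flag (symmetrization factors, the BSSZ weights, the genus bookkeeping) is exactly the content of the cited proof, so the plan is sound and not a different argument.
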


Equation (\ref{eq:first recursion classical}) is especially striking. First of all it provides and ef\/fective procedure to reconstruct the full hierarchy starting from the knowledge of $\og_{1,1}$ only. Secondly, from the point of view of integrable systems, such recursion was not known. Even in the simplest examples it does not coincide with any previously known reconstruction techniques for the symmetries of an integrable hierarchy (it is in fact this recursion that we presented in Section~\ref{section:KdV} for the KdV equation). At the same time, its universal form (its form is rigid, independent of the CohFT or the integrable hierarchy) suggests that it should be regarded as some sort of intrinsic feature of at least a class of integrable systems (see Section~\ref{section:DRtype}).

As a simple consequence of (\ref{eq:second recursion classical}) for $\beta=1$, together with (\ref{eq:string}) we immediately get
\begin{Corollary}
The DR Hamiltonian densities satisfy the string equation $\frac{\d g_{\alpha,d}}{\d u^1} = g_{\alpha,d-1}$.
\end{Corollary}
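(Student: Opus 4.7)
The plan is to specialize the recursion (\ref{eq:second recursion classical}) to $\beta=1$, identify the right-hand side via (\ref{eq:string}) as $\partial_x g_{\alpha,p}$, and then fix the resulting integration constant. Setting $\beta=1$ in (\ref{eq:second recursion classical}) gives
\begin{gather*}
\partial_x \frac{\partial g_{\alpha,p+1}}{\partial u^1} = \big\{g_{\alpha,p}, \og_{1,0}\big\}.
\end{gather*}
With the standard Hamiltonian operator $K^{\mu\nu} = \eta^{\mu\nu}\partial_x$ and (\ref{eq:string}), a direct computation yields $K^{\mu\nu}\frac{\delta \og_{1,0}}{\delta u^\nu} = u^\mu_x$; substituting this into the compatibility formula $\{f,\og\}_K = \sum_{s\ge 0}\frac{\partial f}{\partial u^\mu_s}\partial_x^s\big(K^{\mu\nu}\frac{\delta \og}{\delta u^\nu}\big)$ with $f = g_{\alpha,p}$ collapses the bracket to $\sum_s\frac{\partial g_{\alpha,p}}{\partial u^\mu_s}u^\mu_{s+1} = \partial_x g_{\alpha,p}$. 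Hence $\partial_x\big(\frac{\partial g_{\alpha,p+1}}{\partial u^1} - g_{\alpha,p}\big) = 0$, and since $\ker \partial_x \cap \hcA = \mbC[[\eps]]$, we obtain
\begin{gather*}
\frac{\partial g_{\alpha,p+1}}{\partial u^1} = g_{\alpha,p} + C_{\alpha,p}(\eps)
\end{gather*}
for some constant $C_{\alpha,p}(\eps)$ depending only on $\eps$.

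The main obstacle is ruling out $C_{\alpha,p}$, which I would do by evaluating both sides at $u^*_* = 0$. On the left, the value equals the coefficient of the strictly linear $u^1_0$-monomial in $g_{\alpha,p+1}$; such a monomial would require an entry $(n,\alpha_1,b_1)=(1,1,0)$ in the defining sum, forcing $g = 0$ (since $\sum b_i = 2g$), but the stability bound $2g-1+n>0$ excludes $(g,n)=(0,1)$. So the left-hand side vanishes. On the right, only the $n=0$ terms survive, giving
\begin{gather*}
g_{\alpha,p}\big|_{u^*_* = 0} = \sum_{g\ge 1}(-\eps^2)^g\int_{\oM_{g,1}}\lambda_g\psi_1^p\DR_g(0)\,c_{g,1}(e_\alpha).
\end{gather*}
Since $\lambda_g$ is supported on $\mathcal M_{g,1}^{\mathrm{ct}}$, we may restrict the integrand to compact type, where Hain's formula with the single coefficient $a_1=0$ forces $\DR_g(0)|_{\mathcal M_{g,1}^{\mathrm{ct}}} = 0$, so the sum vanishes termwise. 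Hence $C_{\alpha,p}(\eps) = 0$, and the claimed identity $\frac{\partial g_{\alpha,d}}{\partial u^1} = g_{\alpha,d-1}$ holds for all $d\ge 0$, the choice $p = d-1 \ge -1$ covering also the base case $d=0$ via $g_{\alpha,-1} = \eta_{\alpha\mu}u^\mu$.

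Essentially all the content is carried by the recursion~(\ref{eq:second recursion classical}); the one non-automatic step is the constant check, which turns into a pleasant intersection-theoretic vanishing coming from the interaction between $\lambda_g$ and the Hain formula for $\DR_g(0,\ldots,0)$ on compact type.
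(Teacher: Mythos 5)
Your proposal is correct and follows essentially the same route as the paper, which derives the string equation immediately from the recursion (\ref{eq:second recursion classical}) at $\beta=1$ combined with (\ref{eq:string}), using that $\og_{1,0}$ generates $\partial_x$. The only difference is that you also carefully rule out the integration constant via the vanishing of $\int_{\DR_g(0)}\lambda_g\psi_1^p c_{g,1}(e_\alpha)$, a detail the paper leaves implicit; this check is valid and a welcome addition.
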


\subsection{Tau-structure and the strong DR/DZ equivalence} In \cite{BDGR16a} we provide the DR hierarchy with a tau-structure and study its topological tau-function.

\begin{Theorem}
The DR hierarchy is tau-symmetric. A tau-structure is given by
$h^\DR_{\alpha,p} = \frac{\delta \og_{\alpha,p+1}}{\delta u^1}$.
\end{Theorem}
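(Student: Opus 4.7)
The plan is to verify axioms~1--3 of a tau-structure directly from the string-equation Corollary stated just above the theorem, and to derive tau-symmetry (axiom~4) from a short Jacobi-identity computation that exploits the fact that $h^\DR_{1,-1}$ coincides, modulo $\Im\d_x$, with the linear density $\eta_{1\mu}u^\mu$. Beyond Buryak's commutativity theorem $\{\og_{\alpha,p},\og_{\beta,q}\}=0$ and that Corollary, no further intersection-theoretic input is required.

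For axioms~1--3 I note that $h^\DR_{\alpha,p}=\frac{\delta\og_{\alpha,p+1}}{\delta u^1}$ differs from the partial derivative $\frac{\d g_{\alpha,p+1}}{\d u^1}=g_{\alpha,p}$ only by a $\d_x$-exact term, so $\int h^\DR_{\alpha,p}\,dx=\og_{\alpha,p}$, which is axiom~3. Taking $p=-1$ and using $g_{\beta,-1}=\eta_{\beta\mu}u^\mu$, the local functionals $\oh^\DR_{\beta,-1}=\int\eta_{\beta\mu}u^\mu\,dx$ are $N$ linearly independent elements of $\hLambda$ by nondegeneracy of $\eta$ (axiom~2), and their variational derivatives are the constants $\eta_{\beta\nu}$, hence are annihilated by $\eta^{\alpha\mu}\d_x$, giving axiom~1.

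For tau-symmetry, the central identity is
\begin{gather*}
\{u_1,\og_{\alpha,p+1}\}=\eta_{1\mu}\eta^{\mu\nu}\d_x\frac{\delta\og_{\alpha,p+1}}{\delta u^\nu}=\d_x h^\DR_{\alpha,p},\qquad u_1:=\eta_{1\mu}u^\mu,
\end{gather*}
a direct consequence of the paper's formula for $\{f,\og\}_K$ with $K=\eta\d_x$. Applying the mixed Jacobi identity $\{\{f,\og_1\},\og_2\}-\{\{f,\og_2\},\og_1\}=\{f,\{\og_1,\og_2\}\}$ (the statement that $\og\mapsto\{\cdot,\og\}$ is a Lie-algebra morphism into the derivations of $\hcA$) with $f=u_1$ and $\og_i=\og_{\alpha,p+1}$, $\og_{\beta,q+1}$ kills the right-hand side by commutativity; substituting the central identity into the left-hand side and using the elementary property $\{\d_x r,\og\}=\d_x\{r,\og\}$, this reduces to
\begin{gather*}
\d_x\bigl(\{h^\DR_{\alpha,p},\og_{\beta,q+1}\}-\{h^\DR_{\beta,q},\og_{\alpha,p+1}\}\bigr)=0.
\end{gather*}
The enclosed difference therefore lies in $\ker\d_x\subset\hcA$, i.e.\ in $\mbC[[\eps]]$. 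Evaluating at $u^*_*=0$ annihilates each individual bracket, because the generating sum defining $\og_{\gamma,r+1}$ ranges only over pairs $(g,n)$ with $2g-1+n>0$ whose contributions are either pure $\eps^{2g}$ constants (absorbed into $\mbC[[\eps]]$) or at least quadratic in the $p^*_*$ variables, so $\frac{\delta\og_{\gamma,r+1}}{\delta u^\nu}|_{u^*_*=0}=0$. Hence the integration constant is zero and, after re-indexing $p,q\mapsto p-1,q-1$, one obtains axiom~4 exactly.

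The main friction in this argument is algebraic rather than geometric: one has to justify the mixed Jacobi identity on the full ring $\hcA$ (and not merely on $\hLambda$, where it is the standard Jacobi for Poisson brackets of local functionals). This reduces to showing that both sides are derivations of $\hcA$ agreeing on the generators $u^\mu_k$ and $\eps$, a routine if somewhat tedious exercise in the variational bicomplex. The pointwise vanishing at $u^*_*=0$ is then transparent from the explicit form of the DR Hamiltonian densities as generating functions of intersection numbers subject to the stability constraint $2g-1+n>0$.
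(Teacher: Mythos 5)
Your proof is correct and follows essentially the same route as the paper: the paper's one-line argument applies $\frac{\delta}{\delta u^1}$ to the commutation relation $\{\og_{\alpha,p},\og_{\beta,q}\}=0$, which is precisely your Jacobi-identity computation with $f=\eta_{1\mu}u^\mu$ in $\d_x$-integrated form (via $\{\eta_{1\mu}u^\mu,\og\}=\d_x\frac{\delta\og}{\delta u^1}$). Your additional steps --- verifying axioms 1--3 from the string equation and killing the integration constant by evaluating at $u^*_*=0$ --- are correct and merely make explicit what the paper leaves implicit.
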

\begin{proof}
This is a general feature of integrable hierarchies with the standard Hamiltonian ope\-ra\-tor $K^{\mu\nu} = \eta^{\mu\nu} \d_x$ and satisfying the string equation $\frac{\d \og_{\alpha,d}}{\d u^1} = \og_{\alpha,d-1}$. Indeed it suf\/f\/ices to apply the variational derivative $\frac{\delta}{\delta u^1}$ to the commutation equation $\{\og_{\alpha,p},\og_{\beta,q}\}=0$ to obtain $\big\{h^\DR_{\alpha,p-1},\oh^\DR_{\beta,q}\big\} = \big\{h^\DR_{\beta,q-1},\oh^\DR_{\alpha,p}\big\}$.
\end{proof}
Consider the normal coordinates $\widetilde{u}^\alpha = \eta^{\mu\nu} h^\DR_{\mu,-1}$. Let us write the tau-function associated to the topological solution (with initial datum $\widetilde{u}^\alpha(x,0;\eps) = x \delta^\alpha_1$) as
\begin{gather*} F^{\DR}(t^*_*;\eps)= \sum_{g\ge 0}\eps^{2g}F^{\DR}_g(t^*_*),
\end{gather*}
where
\begin{gather*}
F_g^\DR(t^*_*)=\sum_{\substack{n\ge 0\\2g-2+n>0}}\frac{1}{n!}\sum_{d_1,\dots,d_n\ge 0}\<\prod_{i=1}^n\tau_{d_i}(e_{\alpha_i})\>_g^{\DR}\prod_{i=1}^nt^{\alpha_i}_{d_i}.
\end{gather*}
Notice that this \emph{DR partition function} has only an indirect geometric meaning. Contrary to the correlators of the topological tau-function of the DZ hierarchy (which coincide with the correlators of the CohFT), the correlators $\Big\langle\prod\limits_{i=1}^n\tau_{d_i}(e_{\alpha_i})\Big\rangle_g^{\DR}$ are not a priori def\/ined as intersection numbers in $H^*(\oM_{g,n},\mbQ)$, but only as the coef\/f\/icients of the series $F^{\DR}$. We can a posteriori try to study their geometric meaning, and, as a consequence of certain properties of the DR cycle, we f\/ind the following surprising selection rule.
\begin{Proposition}[\cite{BDGR16a}]\label{prop:selection rule}
$\<\tau_{d_1}(e_{\alpha_1})\cdots\tau_{d_m}(e_{\alpha_m})\>^{\DR}_g=0$ when $\sum\limits_{i=1}^m d_i> 3g - 3 + m $ or $\sum\limits_{i=1}^m d_i \leq 2g-2$.
\end{Proposition}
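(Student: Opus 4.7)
The plan is to reduce the vanishing claims to dimension counts and polynomial-degree considerations on the moduli spaces $\oM_{g,n}$, using the explicit form of the DR Hamiltonian densities as generating functions of the intersection numbers
\[
P^{g;a_1,\dots,a_n}_{\alpha,d;\alpha_1,\dots,\alpha_n}=\int_{\DR_g(-\sum a_i,a_1,\dots,a_n)}\lambda_g\,\psi_1^d\,c_{g,n+1}\big(e_\alpha\otimes\otimes_{i=1}^n e_{\alpha_i}\big).
\]
Since $\lambda_g$ and $\DR_g$ each have definite cohomological degree $g$ and $\psi$-classes contribute linearly, both bounds should arise naturally from tracking these degrees through the tau-structure reconstruction of $F^{\DR}$.

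First I would give a geometric expression for the correlator. By definition of the tau-function, for $m\ge 2$,
\[
\<\tau_{d_1}(e_{\alpha_1})\cdots\tau_{d_m}(e_{\alpha_m})\>^{\DR}_g
= \big[\eps^{2g}\big]\frac{\partial^{m-2}\Omega^{\DR}_{\alpha_1,d_1;\alpha_2,d_2}}{\partial t^{\alpha_3}_{d_3}\cdots\partial t^{\alpha_m}_{d_m}}\bigg|_{u^\beta = x\delta^\beta_1,\, x=0}.
\]
Unfolding via tau-symmetry $\partial_x\Omega^{\DR}_{\alpha,p;\beta,q}=\{h^{\DR}_{\alpha,p-1},\og_{\beta,q}\}$ and the explicit definition of $g_{\alpha,d}$, one writes the correlator as a finite sum over decorated stable graphs of products of elementary intersection numbers $P^{g_v;\vec a_v}_{\alpha_v,d_v;\vec\alpha_v}$, together with combinatorial factors from the Poisson brackets and from the restriction to the topological initial datum.

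For the upper bound $\sum d_i\le 3g-3+m$ I would compare total cohomological degree with dimension on each vertex: on $\oM_{g_v,n_v}$ the integrand $\lambda_{g_v}\,\DR_{g_v}\,\psi^*\,c_{g_v,n_v}$ has degree $2g_v+\sum(\psi\text{-exponents})+\deg c_{g_v,n_v}$, bounded above by $\dim\oM_{g_v,n_v}=3g_v-3+n_v$, and $\deg c_{g_v,n_v}\ge 0$ provides the only slack. Summing over vertices, using $\sum_v g_v+h^1(\Gamma)=g$ and the pairs of marked points identified at each edge, yields $\sum d_i\le 3g-3+m$. For the lower bound $\sum d_i\ge 2g-1$ the crucial input is that $P^{g;\vec a}_{\alpha,d;\vec\alpha}$ is homogeneous of degree exactly $2g$ in the $a_i$'s; hence the $\eps^{2g}$-part of $g_{\alpha,d}$ is a linear combination of monomials in $u^\alpha_k$ of total weight $2g$ under the grading $\deg u^\alpha_k=k$. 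Evaluating on the topological datum at $x=0$ kills every $u^\beta_k$ with $k\ne 1$, so only $u^1_1$ survives, and the weight condition forces at least $2g$ factors of $u^1_1$ per vertex. Propagating this constraint back through the unfolding, together with the string equation $\partial g_{\alpha,d}/\partial u^1=g_{\alpha,d-1}$, translates the missing derivatives into missing $\psi$-degree and yields $\sum d_i\ge 2g-1$.

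The main obstacle I expect is the combinatorial bookkeeping in the graph sum: one must verify that when the iterated Poisson brackets of Theorem \ref{theorem:recursion classical} are distributed, the per-vertex data $(g_v,n_v)$ together with the per-edge pairing and the homogeneity constraint on the $u^*_*$-monomials combine so that both the dimension count and the weight condition translate cleanly into bounds on the global $\sum d_i$, rather than leaving gaps at boundary strata coming from iterated brackets. Making this precise will most likely require an induction on $m$, using the recursion \eqref{eq:first recursion classical}--\eqref{eq:second recursion classical} as the main technical engine and the partial-CohFT property of $\DR_g$ to handle the $a=0$ forgetful pullbacks that appear at each reduction step.
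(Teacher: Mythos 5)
Your starting point is the right one and is essentially the one used in the cited source: write the correlator, for $m\ge 2$, as the coefficient of $\eps^{2g}$ in $\frac{\partial^{m-2}\Omega^{\DR}_{\alpha_1,d_1;\alpha_2,d_2}}{\partial t^{\alpha_3}_{d_3}\cdots\partial t^{\alpha_m}_{d_m}}$ evaluated at the topological point $u^\gamma_k=\delta^{\gamma 1}\delta_{k1}$, and then exploit that $\Omega^{\DR}\in\hcA^{[0]}$ together with the homogeneity of degree $2g$ of $\lambda_g\DR_g(a_1,\dots,a_n)$ in the multiplicities. But the argument as you outline it does not close, and the step you defer ("combinatorial bookkeeping in the graph sum") is not bookkeeping: it is the entire content of the proof, and your announced per-vertex counts are false as stated. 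The homogeneity $\sum b_i=2g_v$ of the $\eps^{2g_v}$-part of $g_{\alpha_v,d_v}$ does \emph{not} survive the contraction into a tree, because each Poisson bracket inserts operators $\partial_x^{s+1}$ (from $\frac{\partial}{\partial u^\mu_s}(\cdot)\,\partial_x^{s}\big(\eta^{\mu\nu}\partial_x\frac{\delta\og}{\delta u^\nu}\big)$) which, via Leibniz, raise the jet-orders of the legs of the receiving vertex; in particular a leg $u^\gamma_0$, which would die at the topological point, can be promoted to $u^\gamma_1$ and survive. Hence "at least $2g$ factors of $u^1_1$ per vertex" is wrong (the correct statement is global: the $\eps^{2g}$-coefficient of the degree-$(m-2)$ differential polynomial above is concentrated in jet-weight $2g+m-2$, so exactly $2g+m-2$ factors of $u^1_1$ survive in total, distributed among vertices in a way that depends on how many derivatives each edge transfers), and the per-vertex count of marked points $n_v$ entering your dimension estimate is not determined by the tree alone.

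This has two consequences. For the upper bound, summing $2g_v+d_v+\tfrac12\deg c_{g_v,n_v}=\dim\DR_{g_v}$ over vertices only yields $\sum d_i\le 3g-3+m$ after the surviving-leg count and the edge jet-order transfers are pinned down globally; done naively it produces bounds that are visibly wrong in genus $0$. For the lower bound the situation is worse: the vertex dimension constraint bounds $d_v$ from \emph{above} (the only slack being $\deg c_{g_v,n_v}\ge 0$), and since a general CohFT class $c_{g_v,n_v}$ has no a priori degree bound below $\dim\oM_{g_v,n_v}$, no per-vertex dimension count can force $\sum d_i\ge 2g-1$. Your phrase "translates the missing derivatives into missing $\psi$-degree" does not identify an actual mechanism; the vanishing for $\sum d_i\le 2g-2$ has to come from the interplay between the global jet-weight $2g+m-2$, the degree-$2g_v$ homogeneity of $\lambda_{g_v}\DR_{g_v}$ in \emph{all} multiplicities (including the edge and the $-\sum a_i$ point), and the string equation $\partial g_{\alpha,d}/\partial u^1=g_{\alpha,d-1}$, and this is precisely the part you have not supplied. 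As it stands the proposal is a plausible plan with the same skeleton as the proof in [BDGR16a], but neither inequality is actually established.
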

In light of the conjectured equivalence with the DZ hierarchy, the f\/irst selection rule looks like the corresponding vanishing property $\<\tau_{d_1}(e_{\alpha_1})\cdots\tau_{d_m}(e_{\alpha_m})\>_g=0$ when $\sum\limits_{i=1}^m d_i> 3g - 3 + m $, which just means that we cannot integrate too many $\psi$-classes without surpassing the dimension of the moduli space (for short, we say that correlators cannot be ``too big''). But the second selection rule actually says that the DR correlators cannot be too small either! This rule one has no analogue in the DZ case and, as it turns out, provides the key to a much deeper understanding of the DR/DZ equivalence.

The situation is that we are trying to compare two integrable tau-symmetric hierarchies by a Miura transformation that is supposed to modify the tau-function by killing all ``small correlators'' (which are present on the DZ side and absent in the DR side). A natural candidate would then be a normal Miura transformation (since they preserve tau-symmetry) generated by a dif\/ferential polynomial $\cF(w^*_*;\eps) \in \hcA^{[-2]}$,
\begin{gather*}\widetilde{u}^\alpha = w^\alpha + \eta^{\alpha\mu} \d_x \big\{ \cF, \oh^{\DZ}_{\mu,0} \big\}_{\DZ},\end{gather*}
and we know that such transformations modify the tau-function by
\begin{gather*}\widetilde{F}(t^*_*;\eps) = F(t^*_*;\eps)+ \cF(w^*_*(x,t^*_*;\eps);\eps)\big|_{x=0}.\end{gather*}
Can we f\/ind $\cF(w^*_*;\eps)$ so that $\widetilde{F}(t^*_*;\eps)$ satisf\/ies the selection rule (i.e., has no small correlators)? As it turns out, yes, and this selects a unique normal Miura transformation!

\begin{Theorem}[\cite{BDGR16a}]\label{thm:unique}
$\exists! \ \cF(w^*_*;\eps) \in \hcA^{[-2]}$ such that $F^\mathrm{red}:=F+ \cF(w^*_*;\eps)|_{x=0}$ satisfies the above selection rules.
\end{Theorem}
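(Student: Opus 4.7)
The plan is to construct $\cF=\sum_{g\ge 1}\eps^{2g}f_g$ iteratively in genus, where each $f_g\in\hcA$ is a homogeneous polynomial of $u$-degree $2g-2$ (forced by $\deg\cF=-2$). Expanding the topological DZ solution as $w^\alpha(x,t;\eps)=\sum_{h\ge 0}\eps^{2h}w^\alpha_{(h)}(x,t)$, the substitution $\cF(w^*_*(x,t;\eps);\eps)|_{x=0}$ decomposes at order $\eps^{2g}$ into the single ``new'' term $f_g(w^*_{*,(0)}(0,t))$ plus a polynomial expression in the previously-determined $f_1,\dots,f_{g-1}$ and the $w_{(h)}$'s with $h\le g$. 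Consequently the defining condition for $\cF$ decouples into a triangular (in $g$) sequence of inhomogeneous linear equations for the $f_g$, and uniqueness of $\cF$ reduces to uniqueness of $f_g$ at each genus.

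At fixed $g$, the ``too big'' vanishings $\sum d_i>3g-3+m$ are automatic on both sides (on the DZ side by dimension of $\oM_{g,m}$; on the correction side by induction on $g$ together with the same dimensional argument for the $F_h$'s with $h\le g$). So the real content is killing the small correlators of $F^{\mathrm{red}}$, those with $\sum d_i\le 2g-2$. The relevant object is the linear map $\Phi_g\colon\hcA^{[2g-2]}\to\mbC[[t^*_*]]_{\mathrm{small}}$ extracting from $f_g(w^*_{*,(0)}(0,t))$ precisely the coefficients of $\prod t^{\alpha_i}_{d_i}$ with $\sum d_i\le 2g-2$, and the key step is to prove that $\Phi_g$ is a bijection onto its target.

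The main step, which I expect to be the core obstacle, is establishing this bijection. The natural candidate pairing matches a monomial $u^{\alpha_1}_{k_1}\cdots u^{\alpha_n}_{k_n}$ in $f_g$ (with $\sum k_i=2g-2$) to the top small correlator with the same descendent profile. To implement this, I would order small correlators by decreasing $\sum d_i$ and use the structure of the genus-zero topological solution dictated by the string equation and the genus-zero DZ hierarchy to prove that, modulo an explicit nonzero constant, the substitution $u^\alpha_k\mapsto w^\alpha_{k,(0)}(0,t)$ sends each top monomial to $\prod t^{\alpha_i}_{k_i}$ plus terms of strictly smaller total descendent order. This makes $\Phi_g$ upper triangular with invertible leading part, hence bijective; then $f_g$ is determined uniquely, and iterating over $g$ yields the unique $\cF$.

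The technical heart of the argument is proving this triangularity claim, which amounts to controlling the combinatorics of the power-series expansion of $w^\alpha_{k,(0)}(0,t)$ in the $t^\beta_q$'s. The cleanest route is probably via the identity $\partial_x=\partial_{t^1_0}$ on the topological solution together with the explicit form of the genus-zero DZ flows $\partial_{t^\beta_q}w^\alpha = \eta^{\alpha\mu}\partial_x(\partial h_{\beta,q}/\partial u^\mu)|_{\eps=0}$: iterating these starting from $w^\alpha|_{t=0}=x\delta^\alpha_1$ should show that the coefficient of $\prod t^{\alpha_i}_{k_i}$ in $\prod w^{\alpha_i}_{k_i,(0)}(0,t)$ is a nonzero constant, while lower-descendent-order coefficients are polynomially expressible in the same data, giving the triangularity. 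Once this combinatorial check is in place, the induction closes and both existence and uniqueness of $\cF$ follow.
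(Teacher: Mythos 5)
Your overall skeleton is the right one, and it matches the strategy of \cite{BDGR16a} (note that the present paper only states Theorem~\ref{thm:unique} and cites that reference for the proof): one constructs $\cF=\sum_{g\ge1}\eps^{2g}f_g$ genus by genus, the ``too big'' vanishings are dimension counts, and at each genus one is left with a linear problem for $f_g$ governed by the substitution $u^\alpha_k\mapsto w^\alpha_{k,(0)}(0,t)$. Your leading-term analysis is also essentially correct once restricted to the small window: since the $n$-linear part of $w^\alpha_{k,(0)}(0,t)$ has total descendent degree $n+k-1$ by the genus-zero dimension constraint, the only contributions of a degree-$(2g-2)$ monomial to coefficients with $\sum d_i\le 2g-2$ come from choosing in each factor either the linear term $t^{\alpha_i}_{k_i}$ or the dilaton-shift constant $\delta^{\alpha_i}_1\delta_{k_i,1}$. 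This does make your map $\Phi_g$ injective and hence proves \emph{uniqueness} of $f_g$.

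The gap is \emph{existence}. The unknowns in $f_g$ are indexed by monomials with $\sum k_i=2g-2$ exactly, whereas the equations to be solved are indexed by all profiles with $\sum d_i\le 2g-2$; a profile with $\sum d_i<2g-2$ is not the leading term of any monomial, so your matrix is of the form ``identity block on the top profiles, plus extra rows below'': injective but not surjective, and ``upper triangular with invertible leading part, hence bijective'' is a non sequitur. Concretely, for $g=2$ the coefficient of $t^\alpha_0$ in $f_2(w_{(0)}(0,t))$ is controlled by the coefficient of the monomial $u^\alpha_0(u^1_1)^2$, which is already pinned down by the top condition on the coefficient of $t^\alpha_0(t^1_1)^2$; the requirement that it also cancel $\langle\tau_0(e_\alpha)\rangle_2$ of the corrected potential is an additional, a priori inconsistent, constraint. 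The missing ingredient is the string and dilaton equations, which hold for $F$ and are inherited by the correction term $\cF(w^*_*(x,t^*_*;\eps);\eps)|_{x=0}$, hence by $F^{\mathrm{red}}$: the dilaton equation $\langle\tau_1(e_1)\prod\tau_{d_i}(e_{\alpha_i})\rangle_g=(2g-2+m)\langle\prod\tau_{d_i}(e_{\alpha_i})\rangle_g$ shows that every small correlator with $\sum d_i<2g-2$ is a nonzero multiple of one of its dilaton descendants with $\sum d_i=2g-2$, so killing the top small correlators (which your construction does, uniquely) automatically kills all the others and the overdetermined system is consistent. Without this step your argument establishes uniqueness but not existence.
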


This makes Buryak's conjecture much more precise.

\begin{Conjecture}[strong DR/DZ equivalence,~\cite{BDGR16a}]\label{conj:strong}
For any semisimple CohFT, the DR and DZ hierarchies coincide up to the normal Miura transformation generated by the unique $\cF(w^*_*;\eps)$ found in Theorem~{\rm \ref{thm:unique}}. Even in the non-semisimple case, we can state this conjecture as \smash{$F^\mathrm{red}=F^{\DR}$}.
\end{Conjecture}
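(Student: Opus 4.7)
The plan is to leverage the full strength of Theorem~\ref{thm:unique} together with the tau-symmetry of both hierarchies. Both the DZ hierarchy (with tau-function $F$) and the DR hierarchy (with tau-function $F^{\DR}$) are tau-symmetric with Hamiltonian operator in Getzler's standard form and share the same string equation, hence the same distinguished normal coordinates up to a Miura transformation. Since a normal Miura transformation acts on tau-functions by $\widetilde{F} = F + \cF(w^*_*(x,t^*_*;\eps);\eps)|_{x=0}$, matching the two hierarchies amounts to showing that the particular $\cF$ produced by Theorem~\ref{thm:unique} yields $F^{\mathrm{red}} = F^{\DR}$.

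First I would exploit the fact that $F^{\DR}$ a priori satisfies the same selection rules as $F^{\mathrm{red}}$: the upper bound $\sum d_i \le 3g-3+n$ on both sides is the usual dimensional bound, while the lower bound $\sum d_i > 2g-2$ holds on the DR side by Proposition~\ref{prop:selection rule} and is imposed by construction on the reduced side. The key rigidity statement to prove is that a formal power series in $t^*_*$ whose genus-$g$ correlators are supported in $2g-2<\sum d_i\le 3g-3+n$ has, for each $g$, only finitely many independent non-trivial correlators, and these correspond precisely to $n\le 2g$ insertions. This reduces the equality $F^{\mathrm{red}}=F^{\DR}$ at each genus $g$ to a finite list of identities between intersection numbers on $\oM_{g,n}$ with $n\le 2g$.

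Next I would translate each such identity geometrically. On the $F^{\DR}$ side the correlators are integrals against $\lambda_g\psi_1^d c_{g,n+1}$ on the DR cycle, so Hain's formula (applicable thanks to the $\lambda_g$ factor) expresses them as polynomials in $a_1,\dots,a_n$ with coefficients in $H^*(\oM_{g,n}^{\mathrm{ct}},\mbQ)$. On the $F^{\mathrm{red}}$ side one combines the CohFT integrals that enter $F$ with contributions from $\cF|_{x=0}$ coming from expanding the normal Miura transformation~\eqref{eq:normal Miura}, themselves recursively determined by the DZ Hamiltonians, $\cF$ being fixed by the selection rules. Equating the two yields, for each $(g,n)$ with $n\le 2g$, a polynomial identity in the $a_i$ between a DZ-side expression built from $\psi$- and $\kappa$-classes and boundary strata, and a DR-side expression built from Pixton's formula for the DR cycle intersected with $\lambda_g$; equivalently, a family of relations in the tautological ring of $\oM_{g,n}$. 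The recursion~\eqref{eq:first recursion classical}, whose universal form does not depend on the CohFT, together with its DZ analogue, should allow one to propagate the check from low $(g,n)$ upward.

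The main obstacle is precisely this last step: proving, uniformly in the CohFT, the resulting finite list of tautological relations. While Pixton's formula and Hain's formula make the DR side reasonably explicit and the recursion relations tame the number of independent identities, controlling the terms produced by $\cF$ in closed form remains delicate; at present the strategy is verified only in specific classes (trivial, Hodge, some $r$-spin cases) and the bottleneck is a CohFT-independent proof of the $n\le 2g$ tautological relations. Resolving this — perhaps by a combined induction on $g$ using the loop axiom on the CohFT side and the recursion~\eqref{eq:first recursion classical} on the DR side — is where I expect the real difficulty to lie.
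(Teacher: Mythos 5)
The statement you are asked to prove is labelled a \emph{Conjecture} in the paper, and the paper contains no proof of it: it only records verification in special cases (trivial CohFT, full Hodge, Witten's $3$-, $4$-, $5$-spin classes, $\mbP^1$, low genus) and sketches a strategy for the general case that is explicitly declared to be work in progress. So there is no ``paper's own proof'' to compare against, and your text, as you yourself acknowledge in the last paragraph, is not a proof either --- it is a research plan whose final step is open.

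That said, your plan is essentially the same as the one the authors announce: exploit the selection rules of Proposition~\ref{prop:selection rule} and Theorem~\ref{thm:unique} to pin down the normal Miura transformation, write both sets of correlators geometrically (Pixton/Hain and $\lambda_g$ on the DR side, $\psi$-classes and boundary strata on the DZ side), and reduce the equality $F^{\mathrm{red}}=F^{\DR}$ to a finite family of tautological relations on each $\oM_{g,n}$ with $n\le 2g$ --- the paper states this family can be cut down to a number of relations per genus equal to the number of partitions of $2g$. Two points in your write-up deserve a caveat. First, your ``key rigidity statement'' (that the two selection rules force all independent correlators into the range $n\le 2g$) is asserted rather than argued; it requires the string and dilaton equations to eliminate the remaining insertions, and this reduction is itself part of the technical content of \cite{BDGR16b,BDGR17}, not a formality. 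Second, the DZ-side correlators involve the quasi-Miura (non-polynomial) transformation $v\mapsto w$, so ``expanding the normal Miura transformation'' there is not as innocuous as on the DR side; controlling those rational-function contributions is precisely where semisimplicity enters and where the uniform-in-CohFT argument currently breaks down. You correctly locate the bottleneck, but nothing in your proposal removes it, so the conjecture remains exactly as open after your argument as before it.
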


When proven true, the conjecture would clearly state that, although equivalent as integrable systems to the DZ hierarchy, the DR hierarchy contains strictly less information than the DZ hierarchy. Indeed, starting from the DZ hierarchy it is possible to construct the normal Miura transformation mapping to the DR hierarchy, while the DR hierarchy does not contain this extra information. This is perhaps not surprising given at least the presence of the class $\lambda_g$ in the DR hierarchy intersection numbers.

From the point of view of integrable systems however, this is of great interest. The fact the DR hierarchy is some sort of standard form of the DZ hierarchy allows to study these systems ignoring complications that might just come from the system of coordinates in which they are described. The presence of powerful recursion relations for the Hamiltonians, for instance, seems to rely precisely on this special standard form.

Finally we remark that the extra information that is killed by the above normal Miura transformation, might be (maybe in part) recovered once we consider the quantum DR hierarchy (which replaces $\lambda_g$ in the construction by the full Hodge class $\Lambda(s)$), see below.

\subsection{The proof of the strong DR/DZ conjecture} In \cite{BDGR16a} we prove the strong DR/DZ equivalence conjecture for a number of CohFTs.

\begin{Theorem}[\cite{BDGR16a,BDGR17}]
The strong DR/DZ equivalence conjecture holds in the following cases:
\begin{itemize}\itemsep=0pt
\item the trivial CohFT,
\item the full Hodge class,
\item Witten's $3$-, $4$- and $5$-spin classes,
\item the GW theory of $\mbP^1$,
\item up to genus $5$ for any rank $1$ CohFT,
\item up to genus $2$ for any semisimple CohFT.
\end{itemize}
\end{Theorem}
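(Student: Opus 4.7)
The general strategy is to leverage the uniqueness afforded by Theorem~\ref{thm:unique}: since Proposition~\ref{prop:selection rule} already guarantees that $F^\DR$ satisfies the selection rules, it suffices, in each case, to exhibit \emph{any} $\cF(w^*_*;\eps)\in\hcA^{[-2]}$ such that $F^\DR = F + \cF(w^*_*(x,t^*_*;\eps);\eps)|_{x=0}$; the $\cF$ so produced must then coincide with the unique one of Theorem~\ref{thm:unique}. Equivalently, one works at the level of hierarchies and exhibits a normal Miura transformation intertwining the DZ and DR hierarchies, reconstructing the tau-function identity from its initial datum on the topological solution.

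For the trivial CohFT the statement is immediate from Buryak's computation that the DR hierarchy is KdV, together with Witten--Kontsevich: since the Witten--Kontsevich partition function already satisfies the selection rule (its non-vanishing correlators have $\sum d_i=3g-3+n$), the unique $\cF$ is zero and both tau-functions coincide. For the Hodge CohFT I would combine Buryak's identification DR\,=\,ILW with the known equivalence DZ\,=\,ILW up to Miura to produce a candidate $\cF$ by explicit comparison of Hamiltonian densities; the tau-function identity then follows by evaluating both sides along the topological solution. For Witten's $r$-spin classes with $r=3,4,5$ the approach is analogous but heavier: the DZ side is the $r$-KdV Gelfand--Dickey hierarchy via Faber--Shadrin--Zvonkine, while the DR side is computed explicitly using Hain's formula (applicable thanks to the factor $\lambda_g$ restricting the integration to the compact-type locus) and the recursion of Theorem~\ref{theorem:recursion classical}, which reduces the whole hierarchy to $\og_{1,1}$. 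For the GW theory of $\mbP^1$ the DZ hierarchy is the extended Toda hierarchy, and the same scheme applies after unfolding the known dispersionless limit and matching dispersive corrections order by order in $\eps$.

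For the two bounded-genus results one proceeds by induction on genus. Writing $\cF=\sum_{g\ge 1}\eps^{2g}\cF_g$, the correction $\cF_g$ only affects correlators of genus $\ge g$; at each fixed genus the combined selection rules $2g-1\le\sum d_i\le 3g-3+n$ reduce the check to finitely many correlators on $\oM_{g,n}$ with $n\le 2g$. For rank $1$ CohFTs the resulting identities sit entirely in the tautological ring and can be verified directly from its known presentation at low genus; for a general semisimple CohFT at genus $\le 2$ one uses the explicit structure of the DZ hierarchy in terms of the underlying Frobenius manifold, together with the genus $1$ and $2$ tautological relations, to match coefficients.

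The principal obstacle is the passage from a candidate $\cF$ at the hierarchy level to the full tau-function identity in the higher rank cases: while hierarchy comparison reduces to matching finitely many Hamiltonian densities in each genus, the tau-function statement requires controlling the action of $\cF$ on the entire topological solution simultaneously, and at higher genus in the semisimple case one runs out of sufficiently precise tautological information. It is exactly this obstruction that motivates the reduction, carried out in \cite{BDGR17}, of the full conjecture to a finite set of relations in the tautological ring of $\oM_{g,n}$ with $n\le 2g$.
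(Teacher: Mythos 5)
The paper offers no proof of this theorem: it is a survey statement deferred entirely to \cite{BDGR16a,BDGR17}, and the text immediately following it says only that ``in all these cases, the proof is either by direct computation or by some ad hoc technique.'' So there is no detailed argument to compare yours against. That said, your outline is consistent with everything the paper does record. The opening reduction is exactly right: since Proposition~\ref{prop:selection rule} guarantees $F^{\DR}$ satisfies the selection rules, exhibiting \emph{any} $\cF$ with $F^{\DR}=F+\cF|_{x=0}$ forces it to be the unique one of Theorem~\ref{thm:unique}, and this is precisely how the explicit cases are handled in Section~\ref{section:examples}: the paper writes down the generating $\cF$ for the Hodge class and for $\mbP^1$, and for $r$-spin observes that the CohFT correlators already satisfy the selection rules by degree reasons (so no correction is needed), matching your treatment of the trivial CohFT.

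Two caveats. First, passing from an identification of Hamiltonians and Poisson structures to the tau-function identity requires matching the \emph{tau-structures} and the topological solutions, not just the hierarchies; you gesture at this and correctly flag it as the delicate point, but in the explicit cases it is an extra verification, not a formality. Second, your claim that for the bounded-genus cases ``the combined selection rules $2g-1\le\sum d_i\le 3g-3+n$ reduce the check to finitely many correlators on $\oM_{g,n}$ with $n\le 2g$'' does not follow from the selection rules alone: for fixed $g$ the number of insertions $n$ is unbounded, and the reduction to finitely many relations (indexed by partitions of $2g$, whence $n\le 2g$) is the nontrivial content of \cite{BDGR17}, resting on the decorated-tree formulas for the correlators and on string/dilaton-type arguments. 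As written, that step of your induction is asserted rather than proved.
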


However, in all these cases, the proof is either by direct computation or by some ad hoc technique. A large and quite technical part of our project deals with proving Conjecture~\ref{conj:strong} on completely general grounds.

The strategy of the proof for the general case which we are pursuing, in \cite{BDGR16b} and our next paper in progress, is to give explicit geometric formulas for the correlators appearing in both~$F^\DZ$ and~$F^\mathrm{red}$ in terms of sums over certain decorated trees corresponding to cycles in the~$\oM_{g,n}$. This reduces the strong DR/DZ conjecture to a~family of relations in the tautological ring of~$\oM_{g,n}$. In particular we managed to further reduce this family to a f\/inite number (equal to the number of partitions of~$2g$) of relations for each genus~\cite{BDGR17}.

\subsection{Quantization} As we already remarked, the idea for the DR hierarchy came from symplectic f\/ield theory where quantum integrable systems arise naturally. Let us see how this happens in the language we used in this document, of cohomological f\/ield theories in the complex algebraic category. The intersection numbers to be considered look perhaps more natural,
\begin{gather*} P_{\alpha,d;\alpha_1,\dots,\alpha_n}^{g;a_1,\dots,a_n}(s)=\int_{\DR_g\left(-\sum a_i,a_1,\dots,a_n\right)}\Lambda\left(s\right) \psi_1^d c_{g,n+1}\big(e_\alpha\otimes\otimes_{i=1}^n e_{\alpha_i}\big).\end{gather*}
Indeed the product $\Lambda(s)c_{g,n+1}\left(e_\alpha\otimes\otimes_{i=1}^n e_{\alpha_i}\right)$ is itself a CohFT (and every CohFT can be written this way), so we are simply intersecting a CohFT, the $\psi$-classes and the DR cycle.

$P_{\alpha,d;\alpha_1,\dots,\alpha_n}^{g;a_1,\dots,a_n}(s)$ is a non-homogeneous polynomial in $a_1,\dots,a_n$ of top degree $2g$, so
\begin{gather*}P_{\alpha,d;\alpha_1,\dots,\alpha_n}^{g;a_1,\dots,a_n}(s) = \sum_{\sum b_i=2k \leq 2g} \widetilde{P}_{\alpha,d;\alpha_1,\dots,\alpha_n}^{g;b_1,\dots,b_n}(s) a_1^{b_1}\cdots a_n^{b_n}, \end{gather*}
and we def\/ine
\begin{gather*}
G_{\alpha,d}:=\sum_{\substack{g\ge 0,n\ge 0\\2g-1+n>0}}\frac{(i \hbar)^g}{n!}\sum_{a_1,\dots,a_n\in\mbZ}
P_{\alpha,d;\alpha_1,\dots,\alpha_n}^{g;a_1,\dots,a_n}\left(\frac{-\eps^2}{i\hbar}\right) p^{\alpha_1}_{a_1}\cdots p^{\alpha_n}_{a_n}\ e^{ix\sum a_i} \\
\hphantom{G_{\alpha,d}}{} = \sum_{\substack{g\ge 0,n\ge 0\\2g-1+n>0}}\frac{(i \hbar)^g}{n!}\sum_{\sum b_i \leq 2g}
\widetilde{P}_{\alpha,d;\alpha_1,\dots,\alpha_n}^{g;b_1,\dots,b_n}\left(\frac{-\eps^2}{i\hbar}\right) u^{\alpha_1}_{b_1}\cdots u^{\alpha_n}_{b_n}.
\end{gather*}

Notice how $(i\hbar)$ has replaced $(-\eps^2)$ as the genus parameter and, at the same time, we have given the Hodge class parameter $s$ the value $\left(\frac{-\eps^2}{i\hbar}\right)$, so that these two choices compensate in the limit $\hbar =0$ to give back the classical Hamiltonian densities $g_{\alpha,p}$.

What about commutativity of these new Hamiltonians? We can again use equation (\ref{eq:DRcomm}), but, because the top Hodge class $\lambda_g$ has now been replaced by the full Hodge class $\Lambda(s)$, all values of $p>0$ will contribute to the sum. This translates into the following equation
\begin{gather*}\big[\oG_{\alpha,p}, \oG_{\beta,q}\big] = 0,\end{gather*}
where $[f,g]:=f\star g - g \star f$ with $f \star g =f \Big( e^{\sum\limits_{k>0} i \hbar k \eta^{\alpha \beta} \overleftarrow{\frac{\partial }{\partial p^\alpha_{k}}} \overrightarrow{\frac{\partial }{\partial p^\beta_{-k}}}}\Big) g$. The exponential here comes precisely from the fact that double ramif\/ication cycles are now glued along any number of marked points, not just one, as it was the case for the classical DR hierarchy.

From a mathematical physics viewpoint this is an entirely new and surprisingly universal quantization technique for integrable f\/ield theories. We have completely explicit formulas for the quantum versions of KdV, Toda, ILW, Gelfand--Dickey and other integrable hierarchies, that, to our knowledge, were either unknown or known in a much more indirect way. The reader will f\/ind these examples in Section~\ref{section:examples}.

This explicit description also rests on the analogue of Theorem \ref{theorem:recursion classical} which, again, allows to reconstruct the full quantum hierarchy from the Hamiltonian $\oG_{1,1}$ alone.

\begin{Theorem}[\cite{BR15}]\label{theorem:recursion}
For all $\alpha=1,\dots,N$ and $p=-1,0,1,\dots$, let $G_{\alpha,-1}=\eta_{\alpha\mu}u^\mu$. We have
\begin{gather}\label{eq:first recursion}
\partial_x (D-1) G_{\alpha,p+1} =\frac{1}{\hbar} \big[ G_{\alpha,p} , \oG_{1,1} \big],
\\ 
\partial_x \frac{\partial G_{\alpha,p+1}}{\partial u^\beta} =\frac{1}{\hbar} \big[G_{\alpha,p}, \oG_{\beta,0} \big],\nonumber
\end{gather}
where $D:=\eps\frac{\partial}{\partial\eps} + 2\hbar\frac{\partial}{\partial \hbar} + \sum\limits_{s\ge 0} u^\alpha_s\frac{\partial}{\partial u^\alpha_s}$.
\end{Theorem}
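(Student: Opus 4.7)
The plan is to mirror the proof of the classical Theorem~\ref{theorem:recursion classical} from~\cite{BR15}, replacing the top Hodge class $\lambda_g$ by the full Hodge class $\Lambda(s)$ throughout and tracking how the additional tautological contributions build the quantum commutator. The three ingredients I would use are: (i)~the formula of~\cite{BSSZ15} expressing $\psi_1\cdot\DR_g(a_0,a_1,\ldots,a_n)$ as a sum of boundary strata (gluings of lower-dimensional DR cycles); (ii)~polynomiality in the $a_i$'s of $\DR_g$ and the multiplicativity of $\Lambda(s)$ under pullback along the gluing maps $\sigma$ and $\tau$; and (iii)~the explicit formula~\eqref{eq:quantum commutator} for the quantum commutator.

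Concretely, I would proceed as follows. The operator $\partial_x(D-1)$ is a grading bookkeeping device whose effect on $G_{\alpha,p+1}$, when expanded in the $p$-variables, matches the geometric operation of raising one power of $\psi_1$ under the integral; hence the left-hand side of~\eqref{eq:first recursion} becomes a generating function for integrals of the form $\int_{\DR_g}\Lambda(s)\,\psi_1^{p+2}\,c_{g,n+1}(\cdots)$. Applying BSSZ to absorb one power of $\psi_1$ splits each such integral into a sum over two-vertex stable graphs joined by $j\ge 1$ edges; each edge carries a positive weight $k_i$ coming from the ramification profile of the two glued DR cycles and an inverted metric $\eta^{\mu\nu}$ on its two halves. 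The antisymmetrization built into BSSZ produces a commutator, rather than a product, after repackaging into generating functions.

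The crucial quantum upgrade is the combinatorial bookkeeping of non-separating edges. Because $\Lambda(s)$ pulls back multiplicatively along the gluing maps, whereas the classical truncation by $\lambda_g$ collapses every boundary graph to a tree, here all values $j\ge 1$ contribute. Under the substitution $s=-\eps^2/(i\hbar)$ coming from the definition of $G_{\alpha,d}$, each extra edge contributes a factor of $i\hbar$, and summing over $j$, over the multiplicities $k_1,\ldots,k_j>0$, and over the distributions of markings on the two vertices reproduces the exponential $\exp\bigl(\sum_{k>0}i\hbar k\,\eta^{\alpha\beta}\overleftarrow{\partial_{p^\alpha_k}}\overrightarrow{\partial_{p^\beta_{-k}}}\bigr)$ defining the $\star$-product, divided by $j!$ for the unordered edge labels. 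The resulting expression equals $\tfrac{1}{\hbar}[G_{\alpha,p},\oG_{1,1}]$. The second recursion, in which $\oG_{1,1}$ is replaced by $\oG_{\beta,0}$, follows from the same argument, where one uses the partial CohFT pullback property of $\DR_g$ at a $0$-weight marked point to identify $\partial/\partial u^\beta$ with the insertion of such a marked point carrying the basis element $e_\beta$. The main obstacle will be the precise matching of the coefficients $C_j^{a_1,\ldots,a_n}$ appearing in~\eqref{eq:quantum commutator} with the polynomials in the $k_i$'s produced by integrating $\Lambda(s)\psi_1$ against each vertex stratum; this is the delicate combinatorial content of the theorem, where the polylogarithm identity defining the $C_j$'s becomes essential.
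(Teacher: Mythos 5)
Your plan follows essentially the same route as the paper's (i.e., \cite{BR15}'s) argument: the BSSZ formula for $\psi_1\cdot\DR_g$, polynomiality of the DR cycle in the ramification profile, and the replacement of $\lambda_g$ by $\Lambda(s)$ so that gluings along arbitrarily many edges survive and assemble into the $\star$-product exponential, with $(D-1)$ arising as the dilaton factor $2g-2+(n+1)$ and $\partial_x$ as the factor $a_0=-\sum a_i$ coming from BSSZ. One simplification worth noting: the matching is carried out entirely in the $p^\alpha_k$-variables, where an edge of multiplicity $k$ contributes $i\hbar k\,\eta^{\mu\nu}$ directly and the $1/p!$ matches the exponential, so the coefficients $C_j^{a_1,\dots,a_n}$ and the polylogarithm identity never enter the proof --- they are only needed to rewrite the commutator in the $u$-variables afterwards.
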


Finally, in \cite{BDGR16b}, we def\/ine and study the quantum analogue of the notion of tau-structure and tau-functions and prove that the quantum DR hierarchy satisf\/ies tau-symmetry. This allows to def\/ine a quantum deformation of the DR potential that clearly contains more geometric information on the associated CohFT and needs to be investigated further.

\subsection{Integrable systems of DR type}\label{section:DRtype} The recursion equation \eqref{eq:first recursion} or its classical version \eqref{eq:first recursion classical} are really surprising from the point of view of integrable systems. No expert we talked to was able to recognize them as something previously known.

Moreover we realized that one could interpret such equation as constraints for the generating Hamiltonian $\oG_{1,1}$ itself, just by imposing that, at each step of the recursion, we still obtain a commuting quantity. This technique proved fruitful to reproduce, for instance, the full DR hierarchy starting from genus~$0$ in the case of polynomial Frobenius manifolds (i.e., those genus~$0$ CohFT associated with Coxeter groups as in~\cite{Dub93}). In doing these computational experiments we realized that the recursions~\eqref{eq:first recursion}, \eqref{eq:first recursion classical} were of independent value in the theory of integrable systems.

Let us f\/irst state our result in the classical situation.

\begin{Theorem}[\cite{BDGR16b}]\label{theorem:recursion->integrability classical}
Assume that a local functional $\oh \in \hLambda^{[0]}$ is such that the recursion
\begin{gather*} \partial_x (D-1) g_{\alpha,p+1} = \big\{ g_{\alpha,p} , \oh \big\},\qquad
 g_{\alpha,-1}=\eta_{\alpha \mu} u^\mu ,\qquad D=\eps \frac{\d}{\d \eps} + \sum_{k\geq 0} u^\alpha_k \frac{\d}{\d u^\alpha_k}\end{gather*}
produces, at each step, Hamiltonians that still commute with $\og_{1,1}$ $($so that the recursion can go on indefinitely$)$. Assume moreover that $\frac{\delta \og_{1,1}}{\delta u^1} = \frac{1}{2}\eta_{\mu \nu } u^\mu u^\nu + \d_x^2 r$, where $r\in \hcA^{[-2]}$.

Then, up to a triangular transformation of the form
\begin{gather*}g_{\alpha,d} \mapsto g_{\alpha,d} + \sum_{i=1}^{d+1} a^\mu_{\alpha, i} g_{\mu,d-i} + b_{\alpha,d}, \qquad a^\mu_{\alpha, i}, b_{\alpha,d} \in \mbC, \qquad d\geq0, \qquad 1\leq \alpha\leq N,\end{gather*}
we have
\begin{itemize}\itemsep=0pt
\item[$(i)$] $\og_{1,1} = \oh$,
\item[$(ii)$] $g_{1,0} = \frac{1}{2}\eta_{\mu \nu } u^\mu u^\nu + \d_x^2 (D-1)^{-1} r$,
\item[$(iii)$] $\{\og_{\alpha,p},\og_{\beta,q}\} = 0$, $\alpha,\beta=1,\dots,N$, $p,q\geq -1$,
\item[$(iv)$] $\{g_{\alpha,p},\og_{\beta,0}\} = \d_x \frac{\d g_{\alpha,p+1}}{\d u^\beta}$, $\beta=1,\dots,N$, $p\geq -1,$
\item[$(v)$] $\frac{\d g_{\alpha,p}}{\d u^1} = g_{\alpha,p-1}$, $\alpha=1,\dots,N$, $p\geq -1$,
\end{itemize}
hence in particular we get an integrable tau-symmetric hierarchy.
\end{Theorem}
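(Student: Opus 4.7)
The plan is to derive (i)--(v) by simultaneous induction on $p$, using the recursion as the main computational tool, the Jacobi identity to propagate commutativity, and the triangular transformation to absorb the ambiguities encountered when inverting $\partial_x(D-1)$. Two observations are essential: $D$ commutes with $\partial_x$ (both act with weight zero on $u^\alpha_k$), and on $\hcA^{[0]}$ the kernel of $(D-1)$ is exactly the $\mbC$-span of $g_{\mu,-1}=\eta_{\mu\nu}u^\nu$. Together with the constant ambiguity from inverting $\partial_x$, this matches exactly the freedom encoded in the triangular transformation, so the recursion determines the $g_{\alpha,p}$ uniquely up to this freedom.

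I would start by computing $g_{\alpha,0}$: from $\{g_{\alpha,-1},\oh\}=\partial_x\frac{\delta\oh}{\delta u^\alpha}$, the recursion at $p=-1$ reduces to $g_{\alpha,0}=(D-1)^{-1}\frac{\delta\oh}{\delta u^\alpha}$. Once (i) is in hand, so that $\frac{\delta\oh}{\delta u^1}=\frac{1}{2}\eta_{\mu\nu}u^\mu u^\nu+\partial_x^2 r$, the action of $(D-1)^{-1}$ as the identity on the $D$-homogeneous quadratic piece and as $\partial_x^2(D-1)^{-1}$ on the remainder yields (ii). In particular $\og_{1,0}=\int \frac{1}{2}\eta_{\mu\nu}u^\mu u^\nu\,dx$, so $\og_{1,0}$ generates translations: $\{f,\og_{1,0}\}=\partial_x f$. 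The string equation (v) is then proved by induction on $p$ by applying $\partial/\partial u^1=\partial_{u^1_0}$ to the recursion; the commutation $\partial_{u^1_0}(D-1)=D\partial_{u^1_0}$ reduces the inductive step to the identity at level $p-1$. Given (v), the $\beta=1$ case of (iv) is the translation identity itself, and the general case follows by applying $\delta/\delta u^\beta$ to the already-proved $\{\og_{\alpha,p+1},\og_{1,1}\}=0$ (once (iii) is known) and reorganizing via the recursion.

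The main obstacle is to promote the hypothesis $\{g_{\alpha,p},\og_{1,1}\}=0$ to full pairwise commutativity (iii), and, simultaneously, to identify $\og_{1,1}$ with $\oh$ (i). For (iii) I would induct on $q$ via the Jacobi identity:
\begin{gather*}
\bigl\{\og_{\alpha,p},\partial_x(D-1)g_{\beta,q+1}\bigr\}
=\bigl\{\og_{\alpha,p},\{g_{\beta,q},\og_{1,1}\}\bigr\}
=\bigl\{\{\og_{\alpha,p},g_{\beta,q}\},\og_{1,1}\bigr\}+\bigl\{g_{\beta,q},\{\og_{\alpha,p},\og_{1,1}\}\bigr\};
\end{gather*}
the last bracket on the right vanishes by the hypothesis, and the first vanishes by the inductive hypothesis $\{\og_{\alpha,p},\og_{\beta,q}\}=0$. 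Rewriting the left side as $\partial_x(D-1)\{\og_{\alpha,p},g_{\beta,q+1}\}$ and using the explicit description of the kernel of $\partial_x(D-1)$ forces $\{\og_{\alpha,p},\og_{\beta,q+1}\}=0$ after the appropriate choice of constants in the triangular transformation. Finally, for (i), I would run the recursion one more step, take $\og_{1,1}=\int g_{1,1}\,dx$, and compare with $\oh$: using (v) at $p=1$ (which gives $\partial g_{1,1}/\partial u^1=g_{1,0}$), (ii), and the hypothesized form of $\delta\og_{1,1}/\delta u^1$, the difference $\oh-\og_{1,1}$ is seen to lie in the span of the Casimirs $\og_{\mu,-1}$ and the translation generators $\og_{\mu,0}$, which is precisely the freedom in the triangular transformation of $g_{1,1}$, completing the argument.
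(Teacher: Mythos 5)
The survey you are reading does not actually reproduce a proof of this theorem --- it is quoted from~\cite{BDGR16b} --- so your sketch can only be judged on its own terms. It assembles the right raw materials (the identification of $\ker \d_x(D-1)$ on $\hcA^{[0]}$ with the triangular freedom, a Jacobi-type induction for commutativity, differentiation of the recursion in $u^1$ for the string equation), but the logical architecture is circular and the single hardest point is asserted rather than proved. You derive (ii) from (i), you derive the translation property of $\og_{1,0}$ and then (v) from (ii), and you only come back to prove (i) at the very end using (v) and (ii). The hypothesis of the theorem constrains $\frac{\delta \og_{1,1}}{\delta u^1}$, an object two steps into the recursion, whereas everything you use early on requires knowledge of $\frac{\delta \oh}{\delta u^1}$ (from the $p=-1$ step one only gets $(D-1)g_{1,0}=\frac{\delta\oh}{\delta u^1}$ up to the allowed ambiguity, which says nothing about its shape). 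Transferring the hypothesis from $\og_{1,1}$ back to $\oh$ is precisely the content of (i) and is the real difficulty of the theorem: note that in the quantum analogue (Theorem~\ref{theorem:recursion->integrability}) exactly this step fails and $\oG_{1,1}=\oH$ must be imposed as an extra hypothesis, so ``the difference $\oh-\og_{1,1}$ is seen to lie in the span of the Casimirs and the $\og_{\mu,0}$'' cannot be waved through. The missing key lemma is: \emph{from} $\frac{\delta\og_{1,1}}{\delta u^1}=\frac12\eta_{\mu\nu}u^\mu u^\nu+\d_x^2 r$ \emph{deduce} $\frac{\delta\oh}{\delta u^1}=\frac12\eta_{\mu\nu}u^\mu u^\nu+(\d_x\text{-exact})$; once that is available the rest of your plan can be reordered consistently.

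Two computational steps are also incorrect as written. First, in the string-equation induction the Leibniz rule gives $\frac{\d}{\d u^1}\{g_{\alpha,p},\oh\}=\{\tfrac{\d g_{\alpha,p}}{\d u^1},\oh\}+\sum_s\tfrac{\d g_{\alpha,p}}{\d u^\mu_s}\eta^{\mu\nu}\d_x^{s+1}\tfrac{\d}{\d u^1}\tfrac{\delta\oh}{\delta u^\nu}$; the second term does not disappear and must be shown to equal $\d_x g_{\alpha,p}$ (so that the right-hand side becomes $\d_x(D-1)g_{\alpha,p}+\d_x g_{\alpha,p}=\d_x D g_{\alpha,p}$, matching $\d_x D\tfrac{\d g_{\alpha,p+1}}{\d u^1}$). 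This holds exactly when $\frac{\d}{\d u^1}\frac{\delta\oh}{\delta u^\nu}=\eta_{\nu\mu}u^\mu$, which by the symmetry of second variational derivatives is equivalent to the shape of $\frac{\delta\oh}{\delta u^1}$ discussed above --- again the missing lemma, not ``the identity at level $p-1$''. Second, in the commutativity induction you rewrite $\big\{\og_{\alpha,p},\d_x(D-1)g_{\beta,q+1}\big\}$ as $\d_x(D-1)\big\{\og_{\alpha,p},g_{\beta,q+1}\big\}$: the Hamiltonian vector field of $\og_{\alpha,p}$ commutes with $\d_x$ (it is evolutionary) but \emph{not} with the Euler operator $D$ unless $\og_{\alpha,p}$ is quadratic, so this identity is false in general and the inductive step does not close; even after integrating, $\int\{g_{\beta,q+1},\og_{\alpha,p}\}dx=0$ does not follow from $\{(D-1)g_{\beta,q+1},\og_{\alpha,p}\}=\mathrm{const}$ without a further argument. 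Finally, the general $\beta$ case of (iv) is a genuinely separate statement (the ``second recursion''), not a consequence of applying $\delta/\delta u^\beta$ to $\{\og_{\alpha,p+1},\og_{1,1}\}=0$, which only yields a tau-symmetry-type identity between brackets of densities.
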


This suggests that it is interesting to consider integrable systems originating from local functionals satisfying the hypothesis of the above theorem. We call them \emph{integrable systems of DR type}.

In the quantum case the theorem is weaker, but only slightly: indeed it is not automatic that a Hamiltonian $\oH \in (\hLambda^\hbar)^{[\leq 0]}$, for which the recursion goes on indef\/initely, f\/its into the recursion itself as $\oG_{1,1}$, so one needs to impose it by hand (but from the practical viewpoint it is just one extra explicit constraint on $\oH$).

\begin{Theorem}[\cite{BDGR16b}]\label{theorem:recursion->integrability}
Assume that a local functional $\oH \in (\hLambda^\hbar)^{[\leq 0]}$ is such that the recursion
\begin{gather*} \partial_x (D-1) G_{\alpha,p+1} = \frac{1}{\hbar}\big[ G_{\alpha,p} , \oH \big], \qquad
 G_{\alpha,-1}=\eta_{\alpha \mu} u^\mu ,\qquad D=\eps \frac{\d}{\d \eps}+2\hbar \frac{\d}{\d \hbar} + \sum_{k\geq 0} u^\alpha_k \frac{\d}{\d u^\alpha_k}\end{gather*}
produces, at each step, Hamiltonians that still commute with $\oH$ (so that the recursion can go on indefinitely). Assume moreover that $\oG_{1,1}=\oH$ and that $\frac{\delta \oH}{\delta u^1} = \frac{1}{2}\eta_{\mu \nu } u^\mu u^\nu + \d_x R+ C(\eps,\hbar)$, where $R\in \big(\hcA^\hbar\big)^{[\le -1]}$ and $c(\eps,\hbar) \in \mbC[[\eps,\hbar]]$.

Then, up to a triangular transformation of the form
\begin{gather*}G_{\alpha,d} \mapsto G_{\alpha,d} + \sum_{i=1}^{d+1} A^\mu_{\alpha, i} G_{\mu,d-i} + B_{\alpha,d}(\eps,\hbar), \\ A^\mu_{\alpha, i} \in \mbC, \qquad B_{\alpha,d} \in \mbC[[\eps,\hbar]], \qquad d\geq0, \quad 1\leq \alpha\leq N,\end{gather*}
we have
\begin{itemize}\itemsep=0pt
\item[$(i)$] $\oG_{1,0} = \int \big( \frac{1}{2}\eta_{\mu \nu } u^\mu u^\nu \big)dx$,
\item[$(ii)$] $[\oG_{\alpha,p},\oG_{\beta,q}] = 0$, $\alpha,\beta=1,\dots,N$, $p,q\geq -1$,
\item[$(iii)$] $\frac{1}{\hbar}[G_{\alpha,p},\oG_{\beta,0}] = \d_x \frac{\d G_{\alpha,p+1}}{\d u^\beta}$, $\beta=1,\dots,N$, $p\geq -1$,
\item[$(iv)$] $\frac{\d G_{\alpha,p}}{\d u^1} = G_{\alpha,p-1}$, $\alpha=1,\dots,N$, $p\geq -1$.
\end{itemize}
\end{Theorem}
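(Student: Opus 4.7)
The plan is to adapt the proof of the classical Theorem~\ref{theorem:recursion->integrability classical}; the main new obstruction is that the quantum commutator $\tfrac{1}{\hbar}[\,\cdot\,,\og]\colon\hcA^\hbar\to\hcA^\hbar$ is not a derivation of the commutative product. I would first pin down $\oG_{1,0}$ and then establish (ii)--(iv) by a joint induction on $p$.

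\emph{Identifying $\oG_{1,0}$.} Running the recursion at $\alpha=1$, $p=-1$ yields $\d_x(D-1)G_{1,0}=\tfrac{1}{\hbar}[\eta_{1\mu}u^\mu,\oH]$. Since $G_{1,-1}=\eta_{1\mu}u^\mu$ is linear in $u$ and has no $\eps,\hbar$ dependence, only the $n=1$, $s_1=0$ terms of~\eqref{eq:quantum commutator} survive and the right-hand side collapses, with no $\hbar$-correction, to $\d_x\tfrac{\delta\oH}{\delta u^1}=\d_x\bigl(\tfrac{1}{2}\eta_{\mu\nu}u^\mu u^\nu+\d_x R+C\bigr)$. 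Using $[D,\d_x]=0$, both $(D-1)^{-1}\d_x R$ and $(D-1)^{-1}C$ vanish modulo $\Im\d_x\oplus\mbC[[\eps,\hbar]]$, so after passing to $\hLambda^\hbar$ one gets $\oG_{1,0}=\int\tfrac{1}{2}\eta_{\mu\nu}u^\mu u^\nu\,dx$ modulo a $\mbC$-linear combination of the Casimirs $\oG_{\mu,-1}$, which is exactly the ambiguity the triangular transformation is designed to absorb. As a byproduct, because $\oG_{1,0}$ is quadratic and derivative-free, $\tfrac{1}{\hbar}[f,\oG_{1,0}]=\d_x f$ on all of $\hcA^\hbar$, so $\oG_{1,0}$ is the generator of $x$-translations.

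\emph{Joint induction for (ii)--(iv).} I would induct on $p\ge -1$ (outer), with a nested induction on $q\ge -1$ (inner). The base $p=-1$ is trivial because $G_{\alpha,-1}$ is a Casimir. For the inductive step, apply $\tfrac{1}{\hbar}[\,\cdot\,,\oG_{\beta,q}]$ to the recursion $\d_x(D-1)G_{\alpha,p+1}=\tfrac{1}{\hbar}[G_{\alpha,p},\oH]$ and use the Jacobi identity together with the standing hypothesis $[\oG_{\beta,q},\oH]=0$ to rewrite the right-hand side as $\tfrac{1}{\hbar}\bigl[\tfrac{1}{\hbar}[G_{\alpha,p},\oG_{\beta,q}],\oH\bigr]$. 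Assuming $\tfrac{1}{\hbar}[\,\cdot\,,\oG_{\beta,q}]$ can be commuted through $\d_x(D-1)$, this becomes a recursion-type identity whose iteration combined with the inductive hypothesis forces $\tfrac{1}{\hbar}[G_{\alpha,p+1},\oG_{\beta,q}]\in\Im\d_x$; integrating over $x$ delivers (ii) at the next level. Specializing $q=0$ gives (iii) at $p+1$, and combining (iii) at $\beta=1$ with the translation property of $\oG_{1,0}$ produces (iv).

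\emph{The hard part.} The commutation of $\tfrac{1}{\hbar}[\,\cdot\,,\oG_{\beta,q}]$ with $\d_x(D-1)$ is exactly where the quantum case diverges from the classical one: since $\tfrac{1}{\hbar}[\,\cdot\,,\oG_{\beta,q}]$ is not a derivation, the Leibniz manipulation used in the classical proof fails, and each positive order in $\hbar$ of~\eqref{eq:quantum commutator} generates a correction term that must be shown to land again in the image of $\d_x(D-1)$ on the appropriate weight space. The control is provided by the weight grading with $\deg\hbar=-2$, which ensures that successive $\hbar$-corrections drop the differential-polynomial degree in a regulated way. This bookkeeping also explains the need for the extra standing hypothesis $\oG_{1,1}=\oH$, which in Theorem~\ref{theorem:recursion->integrability classical} is a conclusion rather than an assumption: only with this input does the iterated recursion close on itself inside the correct degree filtration.
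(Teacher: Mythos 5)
First, a point of reference: the paper you were given does not actually prove this theorem; it states it with the citation [BDGR16b] and no argument (only the much easier tau-symmetry statement earlier in the paper comes with a proof). So your proposal has to stand on its own, and it does not yet.

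The preliminary parts of your sketch are sound: since $G_{1,-1}=\eta_{1\mu}u^\mu$ is linear, only the $n=1$ term of \eqref{eq:quantum commutator} survives and $\tfrac{1}{\hbar}[\eta_{1\mu}u^\mu,\oH]=\d_x\tfrac{\delta\oH}{\delta u^1}$, which together with the hypothesis on $\tfrac{\delta\oH}{\delta u^1}$ pins down $\oG_{1,0}$ up to the allowed triangular ambiguity; and your check that $\tfrac{1}{\hbar}[f,\oG_{1,0}]=\d_x f$ is correct (the $n\ge 2$ terms die because $\d_x^j$ with $j\ge 1$ annihilates the constant $\eta_{\beta_1\beta_2}$). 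The problem is that the entire inductive engine rests on the step you yourself flag: interchanging $\tfrac{1}{\hbar}[\,\cdot\,,\oG_{\beta,q}]$ with $\d_x(D-1)$. Commuting with $\d_x$ is fine, but for $D$ the best one has is the graded Leibniz rule $D\bigl(\tfrac{1}{\hbar}[f,\og]\bigr)=\tfrac{1}{\hbar}[(D-1)f,\og]+\tfrac{1}{\hbar}[f,(D-1)\og]$, so the unwanted term $\tfrac{1}{\hbar}[G_{\alpha,p+1},(D-1)\oG_{\beta,q}]$ appears; it would drop out only if $\oG_{\beta,q}$ were $D$-homogeneous, and it is not (already for KdV, $\oG_1$ mixes $D$-weights $3$ and $4$: compare $u^3$ with $\eps^2 u u_{xx}$). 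Your proposed remedy — bookkeeping in the grading with $\deg\hbar=-2$ — is a different grading from the one $D$ measures and does not remove this term, so the key implication ``$\tfrac{1}{\hbar}[G_{\alpha,p+1},\oG_{\beta,q}]\in\Im\d_x$'' is not established. This is precisely why the actual argument in [BDGR16b] is long and does not proceed by pushing the adjoint action through $\d_x(D-1)$.

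A second, independent gap: even granting $\d_x$-exactness of the commutators (hence item $(ii)$), item $(iii)$ asserts that the specific primitive of $\tfrac{1}{\hbar}[G_{\alpha,p},\oG_{\beta,0}]$ is $\tfrac{\d G_{\alpha,p+1}}{\d u^\beta}$. ``Specializing $q=0$'' in your induction only tells you the commutator is $\d_x$-exact; identifying its primitive with the $u^\beta$-derivative of the next density requires a separate computation (relating $\tfrac{\d}{\d u^\beta}$ applied to the recursion to the bracket with $\oG_{\beta,0}$), which your sketch does not supply. Your derivation of $(iv)$ from $(iii)$ at $\beta=1$ and the translation property of $\oG_{1,0}$ is fine, but it inherits this gap. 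Finally, your rationale for the extra hypothesis $\oG_{1,1}=\oH$ (``closing the degree filtration'') is not the actual reason: as the surrounding text explains, the issue is that in the quantum case one cannot deduce, as one can classically, that $\oH$ itself occurs as $\oG_{1,1}$ in the hierarchy it generates, so this must be imposed.
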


Since the hypothesis of the theorems above can be easily checked order by order in $\hbar$ and~$\eps$, we were able to give a low order classif\/ication of rank $1$ integrable systems of DR type. Both at the classical and and at the quantum level it turns out that they correspond precisely to DR hierarchies associated to rank $1$ cohomological f\/ield theories.

\begin{Proposition}
Rescaling $\eps^2\to\eps^2 \gamma$ and $\hbar \to \hbar \gamma$ to keep track of the genus, the most general rank $1$ hierarchy of DR type is uniquely determined up to genus $4$ by the Hamiltonian
\begin{gather*}
\oG_1=\int \Bigg[\frac{u^3}{6}+\left(\left(-\frac{\eps ^2}{24}-\frac{s_1}{2} i\hbar\right)u_1^2-\frac{i \hbar}{24} u\right)\gamma \\
\hphantom{\oG_1=\int \Bigg[}{} +\left(\left(-\frac{s_1}{120} \eps ^4-\frac{s_1^2}{10} i\hbar\eps ^2 -\frac{24 s_1^3 +5 s_2}{60} (i\hbar) ^2\right)u_2^2\right)\gamma^2\\
\hphantom{\oG_1=\int \Bigg[}{}
+\left(\left( -\frac{s_1^3}{360} \eps^6-\frac{s_2}{1728}\eps^6- \frac{24 s_1^4 +5 s_1 s_2 }{720}i \hbar \eps^4-\frac{4608 s_1^5 +2400 s_2 s_1^2 +35 s_3 }{28800}(i\hbar)^2\eps ^2\!\right.\right.\\
\left.\hphantom{\oG_1=\int \Bigg[}{} -\frac{2304 s_1^6+2400 s_2 s_1^3+105 s_3 s_1-500 s_2^2 }{7200}(i\hbar)^3\right)u_2^3 \\
\hphantom{\oG_1=\int \Bigg[}{}
+\left(-\frac{s_1^2}{420} \eps ^6-\frac{96 s_1^3+5 s_2}{2520}i \hbar \eps^4 - \frac{24 s_1^4+5 s_2 s_1}{105}(i\hbar)^2\eps^2\right.\\
\left. \left. \hphantom{\oG_1=\int \Bigg[}{}
 -\frac{4608 s_1^5+2400 s_2 s_1^2+35 s_3 }{8400}(i\hbar ^3)\right)u_3^2\right)\gamma^3 +O\big(\gamma^4\big)\Bigg]dx.
\end{gather*}
It coincides with the DR hierarchy associated with the most general rank~$1$ CohFT, the class $e^{-\sum\limits_{i\ge 1}\frac{(2i)!}{B_{2i}}s_i\Ch_{2i-1}(\mathbb H)} \in H^*(\oM_{g,n},\mbQ)$, where by $\Ch_{2i-1}(\mathbb H)$ we denote the Chern characters of the Hodge bundle on~$\oM_{g,n}$.
\end{Proposition}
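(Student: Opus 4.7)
The strategy is to combine Theorem \ref{theorem:recursion->integrability} (which reduces the classification of DR-type hierarchies to a finite-dimensional linear problem at each genus) with a direct computation of the DR hierarchy associated to the CohFT $c_{g,n} = e^{-\sum_i \frac{(2i)!}{B_{2i}} s_i \Ch_{2i-1}(\mathbb H)}$, then to use uniqueness to conclude that the two coincide.

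The first step is to write a general ansatz for $\oH = \oG_{1,1}$. After the rescaling $\eps^2 \to \eps^2 \gamma$, $\hbar \to \hbar \gamma$, expand $\oH = \sum_{g \ge 0} \gamma^g \oH^{[g]}$, where each $\oH^{[g]}$ ranges over the finite-dimensional space of rank-$1$ local functionals of degree $0$ (for the grading $\deg u_k = k$, $\deg \eps = -1$, $\deg \hbar = -2$) whose total $\gamma$-weight equals $g$. The triangular freedom $G_{1,d} \mapsto G_{1,d} + \sum_i A_{1,i} G_{1,d-i} + B_{1,d}(\eps,\hbar)$, together with the normalizations $\oG_{1,0} = \int \frac{1}{2} u^2 dx$ and $\oG_{1,1} = \oH$ required by Theorem \ref{theorem:recursion->integrability}, fixes a canonical representative. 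Then iterate the recursion $\d_x(D-1) G_{1,p+1} = \frac{1}{\hbar}[G_{1,p}, \oH]$ starting from $G_{1,-1} = u$, and impose as linear constraints on the ansatz the vanishing of $[\oG_{1,p}, \oH]$ for all required $p$ (a finite set at each genus, since increasing $p$ raises the differential-polynomial degree of $G_{1,p}$). Solving these equations order by order in $\gamma$ up to $\gamma^4$ produces exactly the Hamiltonian in the statement, with three residual complex parameters $s_1, s_2, s_3$ entering at genera $1,2,3$ respectively.

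The second step is the identification with the DR hierarchy of the Hodge-type CohFT. By Theorem \ref{theorem:recursion}, the DR hierarchy of any rank-$1$ CohFT automatically satisfies the DR recursion, so the DR Hamiltonian of $c_{g,n} = e^{-\sum_i \frac{(2i)!}{B_{2i}} s_i \Ch_{2i-1}(\mathbb H)}$ lies in the classified family. Matching parameters then reduces to computing the first few intersection numbers $\int_{\DR_g(-\sum a_i, a_1,\ldots,a_n)} \Lambda(-\eps^2/(i\hbar)) \psi_1^d c_{g,n+1}(e_{a_1} \otimes \cdots)$ and reading off the coefficients of $s_1, s_2, s_3$ at each order in $\gamma$. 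The specific normalization $\frac{(2i)!}{B_{2i}}$ in front of $\Ch_{2i-1}(\mathbb H)$ is precisely what is needed so that, via Mumford's formula relating Chern characters of $\mathbb H$ to $\lambda$-classes, these parameters appear as they do in the displayed expression.

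The main obstacle is combinatorial: at each genus the dimension of the ansatz space grows, the number of commutation constraints increases, and the tautological integrals on the DR cycle become progressively intricate. Since most of the intersection numbers arising here involve non-top-degree pieces of $\Lambda(-\eps^2/(i\hbar))$, Hain's formula on the compact-type locus is only partially applicable, and the remaining contributions must be handled via Pixton's formula or a direct boundary stratification of $\oM_{g,n}$. Up to $\gamma^4$ these computations are finite but laborious and are best carried out with symbolic algebra; pushing the classification beyond genus $4$ is possible in principle by exactly the same method but the bookkeeping quickly becomes prohibitive.
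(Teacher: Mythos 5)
Your proposal matches the paper's (implicitly described) method: the paper obtains this Proposition exactly by checking the hypotheses of Theorem~\ref{theorem:recursion->integrability} order by order in $\eps$ and $\hbar$ on a general degree-zero ansatz for $\oG_{1,1}$, solving the resulting linear constraints genus by genus via symbolic computation, and then identifying the residual parameters $s_i$ with the Hodge-type CohFT by noting that its DR hierarchy automatically lies in the classified family (Theorem~\ref{theorem:recursion}) and matching a few low-genus intersection numbers. No substantive difference from the paper's approach.
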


Tests in rank $2$ show the emergence of classical integrable systems of more general origin. However this was expected from geometry too. Indeed the construction of the classical DR hierarchy also works for partial CohFTs, i.e., CohFTs that do not satisfy the loop gluing axiom. It would appear from computations that classical integrable systems of DR type are classif\/ied by partial CohFTs but only those coming from actual CohFTs possess a DR type quantization.

\section{Examples}\label{section:examples}
In this section we list and work out in detail some of the examples of quantum integrable systems we were able to compute in \cite{Bur15,BDGR16a,BDGR16b,BG15,BR15,BR14}. The explicit formula for the Hamiltonians can either be computed from the intersection numbers with the double ramif\/ication cycle, as the def\/inition of the DR hierarchy prescribes, or by imposing that $\oG_{1,1}$ satisf\/ies the hypothesis of Theorem~\ref{theorem:recursion->integrability}, together with homogeneity with respect to grading of the variables, when applicable.

\subsection{Korteweg--de Vries} The KdV hierarchy is the DR hierarchy of the trivial CohFT, with $V$ one-dimensional and generated by $e_1$, so we can suppress the corresponding index $\alpha=1$, $\eta_{11}=\eta=1$ and $c_{g,n}(\otimes_{i=1}^n e_1) = 1 \in H^0(\oM_{g,n},\mbQ)$. It is uniquely determined by the Hamiltonian \cite{BR15}
\begin{gather*}\oG_1=\int\left(\frac{u^3}{6}+\frac{\eps^2}{24}u u_{xx}-\frac{i\hbar}{24}u\right)dx.\end{gather*}
In \cite{BR15} we found a closed form for the generating series $G(z) = \sum\limits_{d\ge -1}z^d G_d$ for the densities of its symmetries $G_d$ produced by the recursion (\ref{eq:first recursion}) in the dispersionless limit $\eps=0$,
\begin{gather*}G(z)|_{\eps = 0}=\frac{1}{z^2S(\sqrt{i}\lambda z)}e^{zS\big(\frac{\lambda}{\sqrt{i}}z\d_x\big)u}-z^{-2},
\end{gather*}
where
\begin{gather*}
 S(y)=\frac{e^{\frac{y}{2}}-e^{-\frac{y}{2}}}{y},\qquad \lambda^2 =\hbar.
\end{gather*}

\subsection{Intermediate long wave} The full Chern class of the Hodge bundle $c_{g,n}(\otimes_{i=1}^n e_1) = \Lambda(\mu) = \sum\limits_{j=1}^g \mu^j \lambda_j \in H^*(\oM_{g,n},\mbQ)$ is a~mixed degree deformation with parameter $\mu$ of the trivial CohFT, def\/ined on the same $V$ with the same metric. The corresponding hierarchy is uniquely determined by the Hamiltonian~\cite{BR15}{\samepage
\begin{gather*}\oG_1=\int\left(\frac{u^3}{6}+\sum_{g\ge 1}\eps^{2g}\mu^{g-1}\frac{|B_{2g}|}{2(2g)!}uu_{2g}-\frac{i\hbar}{24}u-i\hbar\sum_{g\ge 1}\eps^{2g-2}\mu^g\frac{|B_{2g}|}{2(2g)!}uu_{2g}\right)dx,\end{gather*}
where $B_{2g}$ are Bernoulli numbers: $B_0=1$, $B_2=\frac{1}{6}$, $B_4=-\frac{1}{30}$, $\dots$.}

At the classical limit $\hbar=0$ we also have $\og_0=\int\frac{u^2}{2}dx$, therefore $h_{-1}^\DR=u$, so we see that the coordinate $u$ is normal. In~\cite{Bur15} it is proved that the Miura transformation
\begin{gather}\label{eq:miura for Hodge}
u\mapsto w(u)=u+\sum_{g\ge 1}\frac{2^{2g-1}-1}{2^{2g-1}}\frac{|B_{2g}|}{(2g)!}\eps^{2g}\mu^g u_{2g}
\end{gather}
maps the Hamiltonians and the Hamiltonian operator of this DR hierarchy to the Hamiltonians and the Hamiltonian operator of the Dubrovin--Zhang hierarchy. It is easy to see that the transformation~\eqref{eq:miura for Hodge} has the form~\eqref{eq:normal Miura} if we put
\begin{gather*}
\cF=\sum_{g\ge 1}\eps^{2g}\frac{2^{2g-1}-1}{2^{2g-1}}\frac{|B_{2g}|}{(2g)!}\mu^g u_{2g-2}.
\end{gather*}
In particular, the standard Hamiltonian operator $\partial_x$ is transformed to the Hamiltonian operator
\begin{gather*}
K=\partial_x + \sum_{g\geq 1}\eps^{2g}\mu^g\frac{(2g-1) |B_{2g}|}{(2g)!} \partial_x^{2g+1}.
\end{gather*}
In~\cite{Bur15} it is also explained how this DR hierarchy is related to the hierarchy of the conserved quantities of the intermediate long wave (ILW) equation (see, e.g.,~\cite{SAK79}):
\begin{gather*}
w_\tau + 2 w w_x + T(w_{xx})=0, \\
 T(f):=\mathrm{p.v.}\int_{-\infty}^{+\infty} \frac{1}{2\delta} \left(\mathrm{sgn}(x-\xi) - \coth \frac{\pi (x-\xi)}{2 \delta}\right) f(\xi) d\xi.
\end{gather*}
The ILW equation can be transformed into the f\/irst equation of our DR hierarchy by setting $w=\frac{\sqrt\mu}{\eps} u$, $\tau=-\frac{1}{2}\frac{\eps}{\sqrt\mu} t_1$, $\delta= \frac{\eps\sqrt\mu}{2}$ (indeed $T(f)=\sum\limits_{n\geq 1} \delta^{2n-1} 2^{2n} \frac{|B_{2n}|}{(2n)!}\partial_x^{2n-1} f$).

This means that our methods give a way to determine the symmetries of the ILW equation (alternative to~\cite{SAK79}) and its quantization.

\subsection{Extended Toda}
Consider the cohomological f\/ield theory associated to the Gromov--Witten theory of $\CP^1$. We have $V=H^*(\CP^1,\mbC)=\<1,\omega\>$, where $1$ and $\omega$ is the unit and the class dual to a point respectively. The matrix of the metric in this basis is given by
\begin{gather*}
\eta_{11}=\eta_{\omega\omega}=0,\qquad \eta_{1\omega}=\eta_{\omega 1}=1.
\end{gather*}
The DR hierarchy is uniquely determined by
\begin{gather*}
\oG_{1,1}=\int \left(\frac{(u^1)^2u^\omega}{2}+\sum_{g\ge 1}\eps^{2g}\frac{B_{2g}}{(2g)!}u^1u^1_{2g}+q\left(\frac{e^{\frac{\eps\d_x}{2}}+e^{-\frac{\eps\d_x}{2}}}{2}u^\omega-2\right)e^{S(\eps\d_x)u^\omega}+q u^\omega\right.\\
 \left. \hphantom{\oG_{1,1}=\int}{} -\frac{i\hbar}{12}u^1+i\hbar\sum_{g\ge 1}\eps^{2g-2}\frac{B_{2g}}{(2g)!}u^\omega_{2g} u^1\right)dx,
\end{gather*}
where $q$ can be considered as a parameter (it is in fact the formal variable keeping track of the degree of the covers of $\mbP^1$ enumerated by this Gromov--Witten theory).

At the classical level we have
\begin{gather*}
\og_{1,0}=\int u^1 u^\omega dx \qquad \text{and} \qquad \og_{\omega,0}=\int\left(\frac{(u^1)^2}{2}+q\left(e^{S(\eps\d_x)u^\omega}-u^\omega\right)\right)dx,
\end{gather*}
where $S(z):=\frac{e^{\frac{z}{2}}-e^{-\frac{z}{2}}}{z}$. Therefore, $h^\DR_{1,-1}=u^\omega$ and $h^{\DR}_{\omega,-1}=u^1$. Thus, the coordinates $u^\alpha$ are normal, $\tu^\alpha=u^\alpha$. In~\cite{BR14} we proved that the Miura transformation
\begin{gather}\label{eq:miura for CP1}
u^\alpha\mapsto w^\alpha(u)=\frac{\eps\d_x}{e^{\frac{\eps\d_x}{2}}-e^{-\frac{\eps\d_x}{2}}}u^\alpha=u^\alpha+\sum_{g\ge 1}\eps^{2g}\frac{1-2^{2g-1}}{2^{2g-1}}\frac{B_{2g}}{(2g)!}u^\alpha_{2g}
\end{gather}
maps the Hamiltonians and the Hamiltonian operator of the double ramif\/ication hierarchy to the Hamiltonians and the Hamiltonian operator of the Dubrovin--Zhang hierarchy. It is easy to see that the transformation~\eqref{eq:miura for CP1} has the form~\eqref{eq:normal Miura} if we put
\begin{gather*}
\cF=\sum_{g\ge 1}\eps^{2g}\frac{1-2^{2g-1}}{2^{2g-1}}\frac{B_{2g}}{(2g)!}u^\omega_{2g-2}.
\end{gather*}

The relation with the extended Toda hierarchy follows, at this point, from a result of~\cite{DZ05}. Indeed, consider formal loop space variables $v^1$, $v^2$ and the formal series
\begin{gather*}
a=\sum_{k\in\mathbb Z}a_k(v^*_*;\eps;q)e^{k\eps\d_x},\qquad a_k\in\hcA\otimes\mbC\big[q,q^{-1}\big],
\end{gather*}
let $a_+:=\sum\limits_{k\ge 0}a_k e^{k\eps\d_x}$ and $\Res(a):=a_0$. Consider the operator
\begin{gather*}
L=e^{\eps\d_x}+v^1+qe^{v^2}e^{-\eps\d_x}.
\end{gather*}
The equations of the extended Toda hierarchy look as follows
\begin{gather*}
\frac{\d L}{\d t^1_p}=\eps^{-1}\frac{2}{p!}\big[\big(L^p(\log L-H_p)\big)_+,L\big],\\
\frac{\d L}{\d t^\omega_p}=\eps^{-1}\frac{1}{(p+1)!}\big[\big(L^{p+1}\big)_+,L\big].
\end{gather*}
We refer the reader to~\cite{DZ05} for the precise def\/inition of the logarithm $\log L$. The Hamiltonian structure of the extended Toda hierarchy is given by the operator
\begin{gather*}
K^{\rm Td}=
\begin{pmatrix}
0 & \eps^{-1}\big(e^{\eps\d_x}-1\big) \\
\eps^{-1}\big(1-e^{-\eps\d_x}\big) & 0
\end{pmatrix},
\end{gather*}
and the Hamiltonians
\begin{gather*}
\oh^{\rm Td}_{1,p}[v]=\int\left(\frac{2}{(p+1)!}\Res\big(L^{p+1}(\log L-H_{p+1})\big)\right)dx,\\ 
\oh^{\rm Td}_{\omega,p}[v]=\int\left(\frac{1}{(p+2)!}\Res\big(L^{p+2}\big)\right)dx. 
\end{gather*}
So the equations of the extended Toda hierarchy can be written as follows
\begin{gather*}
\frac{\d v^\alpha}{\d t^\beta_p}=\big(K^{\rm Td}\big)^{\alpha\mu}\frac{\delta\oh^{\rm Td}_{\beta,p}[v]}{\delta v^\mu}.
\end{gather*}

For $k\ge 1$, let
\begin{gather*}
(S_{2k-1})^\alpha_\beta=
\begin{cases}
\displaystyle \frac{1}{k!(k-1)!}q^k,&\text{if $\alpha=1$, $\beta=\omega$},\\
\displaystyle -\frac{2H_{k-1}}{((k-1)!)^2}q^{k-1},&\text{if $\alpha=\omega$, $\beta=1$},\\
0,&\text{otherwise},
\end{cases}
\\
(S_{2k})^\alpha_\beta=
\begin{cases}
\displaystyle \left(\frac{1}{(k!)^2}-\frac{2H_k}{k!(k-1)!}\right)q^k,&\text{if $\alpha=\beta=1$},\\
\displaystyle \frac{1}{(k!)^2}q^k,&\text{if $\alpha=\beta=\omega$},\\
0,&\text{otherwise}.
\end{cases}
\end{gather*}
Here $H_k:=1+\frac{1}{2}+\dots+\frac{1}{k}$, if $k\ge 1$, and $H_0:=0$. For convenience, let us also introduce local functionals $\oh^{\rm Td}_{\alpha,-1}[w]:=\int\eta_{\alpha\mu}w^\mu dx$. For the operator $S_i$, denote by $S^*_i$ the adjoint operator with respect to the metric $\eta$.

\begin{Theorem}[\cite{DZ05}]
The change of coordinates
\begin{gather*}
w^1(v)=\frac{\eps\d_x}{e^{\eps\d_x}-1}v^1,\qquad w^\omega(v)=\frac{\eps^2\d_x^2}{e^{\eps\d_x}+e^{-\eps\d_x}-2}v^2
\end{gather*}
maps the Hamiltonian operator $K^{\rm Td}$ to $K^\DZ$, while for the Hamiltonians we have \begin{gather*}\oh^\DZ_{\alpha,p}[w]=\sum_{i=0}^{p+1}(-1)^i(S^*_i)^\mu_\alpha\oh^{\rm Td}_{\mu,p-i}[v(w^*_*;\eps)].\end{gather*}
\end{Theorem}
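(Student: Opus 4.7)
The plan is to verify the two claims separately: the transformation of the Poisson structure, and the matching of Hamiltonians.

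For the Hamiltonian operator, observe that the change $w(v)$ is linear in the $v^\alpha_s$ with constant coefficients, so the associated Miura operators $(L^*)^\alpha_\mu$ and $L^\beta_\nu$ are functions of $\eps\d_x$ only. Setting $A(z):=z/(e^z-1)$ and $B(z):=z^2/(e^z+e^{-z}-2)$, one reads off $(L^*)^1_1=A(\eps\d_x)$, $L^1_1=A(-\eps\d_x)$ and $(L^*)^\omega_\omega=L^\omega_\omega=B(\eps\d_x)$ (the latter because $B$ is an even function), while the off-diagonal entries of $L,L^*$ vanish since $w^1$ does not involve $v^\omega$ and vice versa. The Miura formula $K^\DZ_w=(L^*)K^{\rm Td}L$ then reduces to scalar identities in $\eps\d_x$ whose algebraic core is $A(z)(e^z-1)=z$. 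Both the $(1,\omega)$ and $(\omega,1)$ entries collapse to $\d_x B(\eps\d_x)$, which is the Dubrovin--Zhang Hamiltonian operator for $\mbP^1$ in the $w$ coordinates.

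For the Hamiltonian correspondence, the cleanest route is via the topological tau function. It is known (Eguchi--Yang, Getzler, Okounkov--Pandharipande) that the Gromov--Witten partition function of $\mbP^1$ is (the logarithm of) the topological tau function of the extended Toda hierarchy. Since a tau-symmetric hierarchy is uniquely reconstructed from its topological tau function together with its choice of normal coordinates, the DZ hierarchy of $\mbP^1$ is recovered from extended Toda by the Miura transformation $v\to w$ combined with a triangular rearrangement of the Hamiltonian densities that adjusts the normalization of the time variables. This rearrangement produces precisely the sum $\sum_{i=0}^{p+1}(-1)^i (S^*_i)^\mu_\alpha \oh^{\rm Td}_{\mu,p-i}$, where the matrices $S_i$ encode the expansion of $\d^2 F_{\mbP^1}/\d t^\alpha_p \d t^\beta_q$ in terms of $\Res(L^n)$ and $\Res(L^n\log L)$.

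The main technical obstacle is the explicit identification of the entries of $S_i$. The harmonic numbers $H_k$ arise from the logarithm in the Toda Hamiltonians: iterating the algebraic identity $L\cdot L^{k-1}(\log L - H_{k-1}) = L^k(\log L - H_k) + k^{-1}L^k$ generates precisely the shifts by $H_k$, while the distinct powers of $q$ appearing in $S_i$ reflect the expansion of these residues in the Novikov parameter. A purely direct verification is also possible in principle---express both sides as differential polynomials in $w$ and match order by order in $\eps$ and $q$---but the combinatorial bookkeeping that makes the harmonic-number coefficients emerge is precisely what the stated form of $S_i$ encodes, and is the real content of the theorem.
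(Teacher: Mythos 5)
You should first note that the paper offers no proof of this statement at all: it is imported verbatim from Dubrovin--Zhang \cite{DZ05} as a known input for relating the DR hierarchy of $\CP^1$ to the extended Toda hierarchy, so there is no internal argument to measure yours against. Judged on its own terms, your treatment of the Hamiltonian operator is essentially right: since $w^\alpha$ is linear in $v^*_*$ with constant coefficients, $L$ and $L^*$ are diagonal constant-coefficient functions of $\eps\d_x$, and the identities $A(z)\big(e^z-1\big)=z$ and $\big(1-e^{-z}\big)A(-z)=z$ do collapse both antidiagonal entries of $L^*\circ K^{\rm Td}\circ L$ to $\d_x B(\eps\d_x)$. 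But even here you \emph{assert} rather than establish that $\d_x B(\eps\d_x)$ is the $(1,\omega)$ entry of $(K^{\DZ}_w)$ for the $\CP^1$ theory. By definition $K^{\DZ}_w$ is obtained by pushing $\eta^{\alpha\beta}\d_x$ through the quasi-Miura map $w^\alpha=v^\alpha+\sum_{g\ge1}\eps^{2g}\,\d^2F_g/\d t^\alpha_0\d x$, so identifying it requires knowledge of the higher-genus Gromov--Witten potentials of $\CP^1$ (or some equivalent input), which you never supply.

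The genuine gap is in the Hamiltonian identity. You reduce it to (i) the statement that the Gromov--Witten partition function of $\mbP^1$ is the topological tau function of the extended Toda hierarchy, and (ii) the explicit identification of the matrices $S_i$. Item (i) is the Eguchi--Yang conjecture, a major theorem (Getzler, Okounkov--Pandharipande, Dubrovin--Zhang) whose proof in \cite{DZ05} essentially \emph{is} the statement being proved here; invoking it as a black box makes the argument circular, or at best transfers all content elsewhere. Item (ii) is never carried out: you correctly name the mechanism producing the harmonic numbers, namely $L\cdot L^{k-1}(\log L-H_{k-1})=L^{k}(\log L-H_k)+k^{-1}L^{k}$, but you then state that the bookkeeping yielding the specific entries of $S_{2k-1}$ and $S_{2k}$ ``is the real content of the theorem,'' i.e.\ you concede it is left unproved. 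A complete argument must actually expand $\Res\big(L^{p+2}\big)$ and $\Res\big(L^{p+1}(\log L-H_{p+1})\big)$ as differential polynomials in $v$, pass to the $w$ coordinates, and match the result against the two-point functions $\Omega_{\alpha,p+1;1,0}$ of the $\CP^1$ theory (computable from its quantum cohomology together with the string and divisor equations); only that computation forces the precise powers of $q$, the factorials, and the shifts $H_k$ appearing in $S_i$. As written, the proposal verifies the easy half and gestures at the hard half.
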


\subsection{Gelfand--Dickey}
Let $r\ge 3$ and consider the cohomological f\/ield theory formed by Witten's $r$-spin classes (see Section~\ref{sect:CohFTs} or, e.g.,~\cite{BG15}). In this case we have $V=\<e_i\>_{i=1,\dots,r-1}$ and the metric is given by $\eta_{\alpha\beta}=\delta_{\alpha+\beta,r}$. Moreover, from dimension counting, we obtain that $\oG^{r\text{-spin}}_{1,1}$ is a homogeneous local functional of degree $2r+2$ with respect to the grading
$|u^{a+1}_k|=r-a$, $|\eps|=1$, $|\hbar|=r+2$.

The following formula can be deduced from the recursion \ref{eq:first recursion classical},
\begin{gather}\label{eq:useful formula for normal}
\tu^\alpha=h^\DR_{r-\alpha,-1}=\frac{\delta}{\delta u^1}\frac{\d}{\d u^{r-\alpha}}(D-2)^{-1}\og_{1,1}=D^{-1}\frac{\d}{\d u^{r-\alpha}}\frac{\delta\og_{1,1}}{\delta u^1},
\end{gather}
and will be useful in what follows.

For the $3$-spin theory we have (see \cite{BDGR16b})
\begin{gather*}
\oG_{1,1}^{3\text{-spin}} = \int\Bigg[\left(\frac{1}{2} \big(u^1\big)^2 u^2+\frac{\big(u^2\big)^4}{36}\right)+\left(-\frac{1}{12} \big(u_1^1\big)^2-\frac{1}{24} u^2 \big(u_1^2\big)^2\right) \eps ^2\\
\hphantom{\oG_{1,1}^{3\text{-spin}} = \int\Bigg[}{} +\frac{1}{432} \big(u_2^2\big)^2 \eps ^4 - \frac{i\hbar}{12} u^1\Bigg] dx.
\end{gather*}
Therefore, for the classical limit,
\begin{gather*}
\frac{\delta\og^{3\text{-spin}}_{1,1}}{\delta u^1}=u^1u^2+\frac{\eps^2}{6}u^1_{xx}.
\end{gather*}
Using~\eqref{eq:useful formula for normal}, we can easily see that the coordinate $u^\alpha$ is normal, $\tu^\alpha=u^\alpha$.

For the $4$-spin theory we have (see \cite{BDGR16b})
\begin{gather*}
\oG_{1,1}^{4\text{-spin}} =\int \Bigg[\left(\frac{ u^1 \big(u^2\big)^2}{2}+\frac{\big(u^1\big)^2 u^3}{2} +\frac{\big(u^2\big)^2 \big(u^3\big)^2}{8} +\frac{\big(u^3\big)^5}{320}\right) \\
\hphantom{\oG_{1,1}^{4\text{-spin}} =\int \Bigg[}{} +\left(-\frac{\big(u_1^1\big)^2}{8} -\frac{u^3
 \big(u_1^2\big)^2}{16} -\frac{u^3 u_1^1 u_1^3}{32} +\frac{3}{64} \big(u^2\big)^2 u_2^3+\frac{1}{192} \big(u^3\big)^3 u_2^3\right) \eps ^2 \\
\hphantom{\oG_{1,1}^{4\text{-spin}} =\int \Bigg[}{} +\left(\frac{1}{160} \big(u_2^2\big)^2+\frac{3}{640} u_2^1
 u_2^3+\frac{5 \big(u^3\big)^2 u_4^3}{4096}\right) \eps ^4-\frac{\big(u_3^3\big){}^2 \eps ^6}{8192} \\
 \hphantom{\oG_{1,1}^{4\text{-spin}} =\int \Bigg[}{} +\left( \frac{1}{96} \big(u^3_1\big)^2 - \frac{1}{96} \big(u^3\big)^2 -\frac{1}{8} u^1 \right)i\hbar - \frac{1}{1280} u^3 i\hbar \eps^2 \Bigg]dx.
\end{gather*}
Therefore, for the classical limit,
\begin{gather*}
\frac{\delta\og^{4\text{-spin}}_{1,1}}{\delta u^1}=u^1u^3+\frac{\big(u^2\big)^2}{2}+\eps^2\left(\frac{1}{4}u^1_{xx}+\frac{1}{64}\d_x^2\big(\big(u^3\big)^2\big)\right)+\eps^4\frac{3}{640}u^3_4,
\end{gather*}
and the normal coordinates are given by
\begin{gather*}
\tu^1=u^1+\frac{\eps^2}{96}u^3_{xx},\qquad \tu^2=u^2,\qquad \tu^3=u^3.
\end{gather*}

For the $5$-spin theory we content ourselves to write the classical Hamiltonian (see \cite{BG15}),
\begin{gather*}
\og^{5\text{-spin}}_{1,1}=\int\Bigg[\frac{\big(u^1\big)^2 u^4}{2}+u^1 u^2 u^3+\frac{\big(u^2\big)^3}{6}+\frac{\big(u^3\big)^4}{30}+\frac{u^2\big(u^3\big)^2 u^4}{5}+\frac{\big(u^2\big)^2 \big(u^4\big)^2}{10}\\
\hphantom{\og^{5\text{-spin}}_{1,1}=\int\Bigg[}{}+\frac{\big(u^3\big)^2 \big(u^4\big)^3}{50} +\frac{\big(u^4\big)^6}{3750} + \eps^2\left(\frac{1}{6}u^1u^1_2+\frac{3}{20}u^2 u^3 u^3_2+\frac{1}{10}u^2\big(u^3_1\big)^2+\frac{1}{20}u^1_2 u^3 u^4\right. \\
\hphantom{\og^{5\text{-spin}}_{1,1}=\int\Bigg[}{}
+\frac{1}{10}u^2 u^2_2 u^4+\frac{1}{40} \big(u^2_1\big)^2 u^4 +\frac{1}{50} u^2 u^4\big(u^4_1\big)^2+\frac{1}{75} u^2 \big(u^4\big)^2 u^4_2+\frac{1}{75}(u^3)^2 u^4u^4_2\\
\left.\hphantom{\og^{5\text{-spin}}_{1,1}=\int\Bigg[}{}
+\frac{1}{50} u^3 u^3_2\big(u^4\big)^2+\frac{1}{1200}\big(u^4\big)^4 u^4_2\right)
+\eps^4\left(\frac{7}{600}u^2 u^2_4+\frac{11}{900}u^1 u^3_4+\frac{7}{1200}u^2 u^4 u^4_4\right.\\
\hphantom{\og^{5\text{-spin}}_{1,1}=\int\Bigg[}{}
+\frac{17}{1200}u^2 u^4_1 u^4_3+\frac{71}{7200}u^2 \big(u^4_2\big)^2+\frac{31}{3600}u^3 u^3_4 u^4 +\frac{7}{450}u^3_1 u^3_3 u^4+\frac{91}{7200}\big(u^3_2\big)^2 u^4\\
\left.\hphantom{\og^{5\text{-spin}}_{1,1}=\int\Bigg[}{}
+\frac{13}{12000}\big(u^4_2\big)^2\big(u^4\big)^2+\frac{3}{4000}u^4_2 \big(u^4_1\big)^2 u^4\right)
+ \eps^6\left(\frac{53}{108000}u^3 u^3_6+\frac{11}{18000}u^2 u^4_6\right.\\
\left.\hphantom{\og^{5\text{-spin}}_{1,1}=\int\Bigg[}{} +\frac{1397}{6480000}\big(u^4_3\big)^2 u^4+\frac{617}{1620000}u^4_4 u^4_2 u^4\right)
+\eps^8 \frac{107}{10800000}u^4 u^4_8\Bigg]dx.
\end{gather*}
Therefore, for the classical limit,
\begin{gather*}
\frac{\delta\og^{5\text{-spin}}_{1,1}}{\delta u^1}=u^1u^4+u^2u^3+\eps^2\left(\frac{1}{3}u^1_{xx}+\frac{1}{20}\d_x^2\big(u^3u^4\big)\right)+\eps^4\frac{11}{900}u^3_4,
\end{gather*}
and the normal coordinates are given by
\begin{gather*}
\tu^1=u^1+\frac{\eps^2}{60}u^3_{xx},\qquad
\tu^2=u^2+\frac{\eps^2}{60}u^4_{xx},\qquad
\tu^3=u^3,\qquad
\tu^4=u^4.
\end{gather*}

In~\cite{BG15} it was proved that the Hamiltonians and the Hamiltonian operator of these DR hierarchies in the coordinates~$\tu^\alpha$ coincide with the Hamiltonians and the Hamiltonian operator of the corresponding Dubrovin--Zhang hierarchies, without further normal Miura transformation, i.e., in our previous notations for the loop space variables of the DZ hierarchy, we have $w^\alpha=\tu^\alpha$.

This is in agreement with Conjecture \ref{conj:strong} as, by simple degree reasons, the CohFT correlators
$\<\tau_{d_1}(e_{\alpha_1})\cdots\tau_{d_n}(e_{\alpha_n})\>_g$ already satisfy the selection rule of Proposition~\ref{prop:selection rule}. This fact is true, more in general, for any Fan--Jarvis--Ruan--Witten CohFT of $ADE$ type and gives a powerful way of computing the DZ hierarchy for these CohFTs via the much more explicit DR hierarchy construction.

The relation with the Gelfand--Dickey hierarchies is described as follows. First let us recall the def\/inition of the Gelfand--Dickey hierarchies. Consider formal loop space variables $f_0,\dots,f_{r-2}$ and let
\begin{gather*}
L:=D_x^r+f_{r-2}D_x^{r-2}+\dots+f_1D_x+f_0, \qquad \text{where}\quad D_x:=\eps \d_x.
\end{gather*}
The $r$-th Gelfand--Dickey hierarchy is the following system of partial dif\/ferential equations
\begin{gather}\label{eq:GD hierarchy}
\eps\frac{\d L}{\d T_m}=\big[\big(L^{m/r}\big)_+,L\big],\qquad m\ge 1.
\end{gather}
We immediately see that $\frac{\d L}{\d T_{rk}}=0$, so we can omit the times~$T_{rk}$. Since $(L^{1/r})_+=D_x$, we have $\frac{\d f_i}{\d T_1}=(f_i)_x$.

The Hamiltonian structure of the Gelfand--Dickey hierarchy is def\/ined as follows. Consider dif\/ferential polynomials $X_0,X_1,\dots,X_{r-2}\in\hcA$ in the formal loop variables $f_0,\dots,f_{r-2}$ and a~pseudo-dif\/ferential operator
\begin{gather*}
X:=D_x^{-(r-1)}\circ X_{r-2}+\dots+D_x^{-1}\circ X_0.
\end{gather*}
It is easy to see that the positive part $[X,L]_+$ of the commutator has the following form
\begin{gather*}
[X,L]_+=\sum_{0\le\alpha,\beta\le r-2}\big(\big(K^\GD\big)^{\alpha\beta}X_\beta\big)D_x^\alpha,
\end{gather*}
where
\begin{gather*}
\big(K^{\GD}\big)^{\alpha\beta}=\sum_{i\ge 0}\big(K^{\GD}\big)^{\alpha\beta}_i\d_x^i,\qquad \big(K^{\GD}\big)^{\alpha\beta}_i\in\hcA,
\end{gather*}
are dif\/ferential operators and the sum is f\/inite. The operator $K^{\GD}=\big(\big(K^{\GD}\big)^{\alpha\beta}\big)_{0\le\alpha,\beta\le r-2}$ is Hamiltonian. Consider local functionals
\begin{gather*}
\oh_m^{\GD}:=-\frac{r}{m+r}\int \res L^{(m+r)/r}dx,\qquad m\ge 1.
\end{gather*}
We have
\begin{gather*}
\big\{\oh^{\GD}_m,\oh^{\GD}_n\big\}_{K^{\GD}}=0.
\end{gather*}
For a local functional $\oh\in\hLambda$ def\/ine a pseudo-dif\/ferential operator $\frac{\delta\oh}{\delta L}$ by
\begin{gather*}
\frac{\delta\oh}{\delta L}:=D_x^{-(r-1)}\circ\frac{\delta\oh}{\delta f_{r-2}}+\dots+D_x^{-1}\circ\frac{\delta\oh}{\delta f_0}.
\end{gather*}
Then the right-hand side of~\eqref{eq:GD hierarchy} can be written in the following way
\begin{gather*}
\big[\big(L^{m/r}\big)_+,L\big]=\left[\frac{\delta\oh^{\GD}_m}{\delta L},L\right]_+=\sum_{0\le\alpha,\beta\le r-2}\left(\big(K^\GD\big)^{\alpha\beta}\frac{\delta\oh^{\GD}_m}{\delta f_\beta}\right)D_x^\alpha.
\end{gather*}
Therefore, the sequence of local functionals $\oh^{\GD}_m$ together with the Hamiltonian operator~$K^{\GD}$ def\/ine a Hamiltonian structure of the Gelfand--Dickey hierarchy~\eqref{eq:GD hierarchy}.

The DZ hierarchy is related to the Gelfand--Dickey hierarchy as follows. Introduce the Miura transformation
\begin{gather*}
\tu^\alpha=\frac{1}{(r-\alpha)(-r)^{\frac{r-\alpha-1}{2}}}\res L^{(r-\alpha)/r}.
\end{gather*}
Then we have
\begin{align*}
&K^\DZ:=(-r)^{\frac{r}{2}}K^{\GD}_\tu,\\
&\oh_{\alpha,d}^\DZ:=\frac{1}{(-r)^{\frac{r+k-1}{2}-d}k!_r}\oh_k^{\GD}[v(\tu^*_*;\eps)],
\end{align*}
where $k:=\alpha+rd$ and $k!_r:=\prod\limits_{i=0}^d(\alpha+ri)$.

\subsection*{Acknowledgements} This paper originates in part from my Habilitation m\'emoire~\cite{Ros16} and in part from an introductory talk I~gave at the RAQIS'16 conference held at Geneva, Switzerland, in August~2016. I~would like to express my gratitude to its organizers. Moreover I would like to thank my direct collaborators on the DR hierarchy project, A.~Buryak, B.~Dubrovin and J.~Gu\'er\'e, and the people who supported us with advice and insight, among the others Dimitri Zvonkine, Rahul Pandharipande and Yakov Eliashberg. During this work I was partially supported by a Chaire CNRS/Enseignement superieur 2012--2017 grant.

\pdfbookmark[1]{References}{ref}
\LastPageEnding

\end{document}